\documentclass[a4paper,journal,twoside]{IEEEtran}
\IEEEoverridecommandlockouts

\newcommand\hmmax{0}
\newcommand\bmmax{0}

\usepackage[ruled,vlined,linesnumbered]{algorithm2e}
\SetKwIF{If}{ElseIf}{Else}{if}{}{else if}{else}{end if}%

\usepackage{graphicx}
\usepackage{tikz}
\usepackage{pgfplots}
\usepackage{xcolor}
\usepackage{color, colortbl}
\usepackage{amsmath}
\usepackage{bbm}
\usepackage{amsfonts, amssymb}
\usepackage{mathtools}
\usepackage{enumerate}
\usepackage{cite}
\usepackage[nolist]{acronym}
\usepackage{amsmath}

\usepackage{booktabs} 		
\usepackage{diagbox}		
\usepackage{upgreek}
\usepackage{bbm}
\usepackage{Files/bm}
\usepackage{Files/winsnotation}
\usepackage{Files/Symbols_OPL}
\usepackage{Files/LVcolours} 

\newcommand{\rev}[1]{#1}

\usepackage{setspace}	 	
\usepackage{comment}
\usepackage{physics}
\usepackage{enumitem}

\newlist{myitemize}{itemize}{3}
\setlist[myitemize,1]{label=1.,leftmargin=1em}
\setlist[myitemize,2]{label=$\rightarrow$,leftmargin=0.75em}
\setlist[myitemize,3]{label=$\diamond$}

\usepackage{array}
\newcolumntype{C}[1]{>{\centering\arraybackslash}p{#1}}
\newcolumntype{l}[1]{>{\raggedright\arraybackslash}p{#1}}

\makeatletter
\def\endthebibliography{%
  \def\@noitemerr{\@latex@warning{Empty `thebibliography' environment}}%
  \endlist
}
\makeatother
\usepackage{amsthm}
\theoremstyle{definition}
\newtheorem{definition}{Definition}

\newtheorem{theorem}{Theorem}

\usepackage{algpseudocode,amsmath}

\pgfplotsset{compat=1.17}

\begin{document}

\title{Fault-Tolerant Cut-Cat State \\ Syndrome Extraction  for Quantum Codes}



\author{Diego Forlivesi,~\IEEEmembership{Graduate~Student~Member,~IEEE,}
Lorenzo~Valentini,~\IEEEmembership{Member,~IEEE,}
        and~Marco~Chiani,~\IEEEmembership{Fellow,~IEEE}
\thanks{The authors are with the Department of Electrical, Electronic, and Information Engineering ``Guglielmo Marconi'' and CNIT/WiLab, University of Bologna, V.le Risorgimento 2, 40136 Bologna, Italy. E-mail: \{diego.forlivesi2, lorenzo.valentini13, marco.chiani\}@unibo.it. 
Work funded in part by the European Union - Next Generation EU, PNRR project PRIN n. 2022JES5S2.
}
}

\maketitle 
\markboth{}{Forlivesi, Valentini, Chiani}

\begin{acronym}
\small
\acro{BCH}{Bose–Chaudhuri–Hocquenghem}
\acro{BC}{bubble clustering}
\acro{CDF}{cumulative distribution function}
\acro{CRC}{cyclic redundancy code}
\acro{LDPC}{low-density parity-check}
\acro{LUT}{lookup table}
\acro{ML}{maximum likelihood}
\acro{MWPM}{minimum weight perfect matching}
\acro{QECC}{quantum error correcting code}
\acro{PDF}{probability density function}
\acro{PMF}{probability mass function}
\acro{MPS}{matrix product state}
\acro{WEP}{weight enumerator polynomial}
\acro{WE}{weight enumerator}
\acro{BD}{bounded distance}
\acro{QLDPC}{quantum low density parity check}
\acro{CSS}{Calderbank, Shor, and Steane}
\acro{MST}{minimum spanning tree}
\acro{PruST}{pruned spanning tree}
\acro{RFire}{Rapid-Fire}
\acro{UF}{union-find}
\acro{LEMON}{library for efficient modeling and optimization in networks}
\acro{STM}{spanning tree matching}
\acro{i.i.d.}{independent identically distributed}
\acro{QEC}{quantum error correction}
\acro{BP}{belief propagation}
\acro{FT}{fault-tolerant}
\acro{LUT}{look-up table}

\end{acronym}
\setcounter{page}{1}

\begin{abstract}
Reliable quantum computation requires fault-tolerant protocols to prevent errors from propagating during syndrome extraction in quantum error correction.
We present a novel fault-tolerant syndrome extraction technique for \acs{CSS} codes, which we refer to as the cut-cat state scheme.
While each ancilla qubit interacts non-fault-tolerantly with a pair of data qubits, we introduce additional cat stabilizer measurements to identify and correct the resulting hook errors.
Our approach maintains the key benefit of cat-based extraction, i.e., parallelized data qubit interactions, while reducing the number of simultaneous qubits required by more than half.
Compared to flag-based state-of-the-art protocols, the cut-cat scheme offers a notable advantage in terms of two-qubit gate count as the code distance increases.

\end{abstract}

\begin{IEEEkeywords} Quantum Error Correcting Codes, Quantum Communications, Quantum Computing
\end{IEEEkeywords}

\section{Introduction}

One of the central challenges in quantum computing architectures is mitigating the impact of noise arising from unavoidable interactions between quantum systems and their environments~\cite{Sho:95,Laf:96,Kni:97,FleShoWin:08}.
To protect quantum information from such errors, \ac{QEC} encodes logical qubits into larger assemblies of physical qubits, using ancillary qubits to detect and correct errors without violating the no-cloning theorem.
\ac{QEC} is therefore a fundamental component for enabling \ac{FT} quantum computation, long-lived quantum memories, and the scalable execution of quantum algorithms~\cite{AhaBen97:FTschemeFp, FowMarMar:12, Ter:15, DenKitLan:02, Bab:19, ForValChi:23}.
A major class of quantum error-correcting codes is that of stabilizer codes\cite{Got:09}, among which \ac{CSS} codes are particularly well-suited for practical realization\cite{CalSho:96, Ste:96, DenKitLan:02, Bom06:colorCodes, ValForChi25:CylMob, ForValChi24:MacW}.
Indeed, \ac{CSS} codes feature transversal encoded gates by construction and support a variety of tailored \ac{FT} protocols \cite{Got:96, Ste97:SteaneGadget}.
From a decoding perspective, they also offer a notable advantage: since each stabilizer generator contains only $\M{X}$ or $\M{Z}$ Pauli operators, $\M{X}$ and $\M{Z}$ errors can be decoded independently, significantly reducing the overall decoder complexity.
As a result, several optimized decoders have been developed for different classes of \ac{CSS} codes~\cite{Hig:23, HigGid:23, Del:21, rof20:BPplusOSD, ForValChi25:Bub, ValForChi25:latency}.

In \ac{QEC}, errors are diagnosed by measuring the code generators.
However, the measurement circuits themselves are imperfect, and faults can propagate to multiple data qubits during syndrome extraction.
Such correlated faults are commonly referred to as hook errors~\cite{DenKitLan:02}.
One \ac{FT} approach involves preparing additional ancilla qubits in a special entangled state known as a cat state, which are then coupled to the data qubits to extract the syndrome~\cite{Sho96:CatState}.
Subsequent improvements reduced the ancilla overhead by coupling each ancilla qubit to two data qubits instead of one~\cite{Ste:14, Yod:17, cha18:FirstFLagFT}.
However, while the method described in~\cite{Sho96:CatState} can tolerate an arbitrary number of faults, provided that the cat state is prepared with sufficient fault-tolerance, these later schemes are only \ac{FT} up to distance three (i.e., can manage one error).
An alternative \ac{FT} approach to syndrome extraction involves the use of flag qubits, which are auxiliary qubits entangled with the syndrome qubit such that problematic hook errors also affect the flag qubits \cite{cha18:FirstFLagFT, Cha18:FlagsDet, cha19:MagicStateFlags}.
By analyzing the resulting flag pattern, the location of faults can be inferred, allowing for appropriate correction or mitigation of the induced data qubit errors.

In this paper, we introduce a novel \ac{FT} syndrome extraction technique for \ac{CSS} codes, with explicit constructions demonstrated up to distance nine, based on a modified cat state construction referred to as the cut-cat state scheme.
In this approach, each cat qubit interacts with a pair of data qubits in a non-\ac{FT} manner.
To mitigate the resulting hook errors, we perform additional cat stabilizer measurements designed to detect and correct such faults.
The cut-cat state scheme preserves the main advantage of cat-based extraction, which is the parallelized interaction with data qubits. At the same time, it reduces the number of simultaneous qubits by more than half.
The scheme is primarily tailored to CSS code families that include stabilizer generators of medium to high weight, as arise in many nonlocal or partially nonlocal constructions, including certain QLDPC~\cite{Breebe21:QLDPCcodes}, triorthogonal codes~\cite{BraHaa:12}, and concatenated quantum codes~\cite{Yam24:ConcatenatedCodes, Got24:ManyHypercube}, where the concatenation process leads to stabilizer generators whose weight grows across levels.
As a representative application, we analyze the $[[49,1,5]]$ triorthogonal quantum code \cite{TransversalClifford25:ShuVic}, which features stabilizer generators with large support and whose triorthogonal structure enables transversal non-Clifford logical gates, making it particularly well suited for \ac{FT} magic-state distillation \cite{BraHaa:12, DagBluBro:25}.
Moreover, we provide a proof of fault-tolerance of the proposed scheme for such code distances, and compare it against state-of-the-art methods based on full cat states and flag qubits, evaluating simultaneous qubit requirements, two-qubit gate count, and circuit depth.

This paper is organized as follows. Section~\ref{sec:preliminary} introduces preliminary concepts about \acp{QECC} and \ac{FT} syndrome extraction protocols. 
In Section~\ref{sec:Cat?FT}, we thoroughly describe the proposed cut-cat state protocol. 
Section~\ref{sec:Decoding} describes the designed decoding technique.
\rev{Section~\ref{sec:Proofs}} provides a proof about the fault-tolerance of the cat cut syndrome extraction method.
Finally, numerical results are presented in Section~\ref{sec:NumRes}.

\section{Preliminaries and Background}
\label{sec:preliminary}
\subsection{Stabilizer Formalism}
\label{subsec:QEC}

The Pauli operators are denoted by $\M{X}$, $\M{Y}$, and $\M{Z}$. A \ac{QECC} encoding $k$ logical qubits $\ket{\varphi}$ into a codeword of $n$ physical qubits $\ket{\psi}$ with minimum distance $d$ is denoted as $[[n, k, d]]$. Such a code can correct all error patterns affecting up to $t = \lfloor (d - 1)/2 \rfloor$ physical qubits.
In the stabilizer formalism, a code is specified by $n - k$ independent, commuting operators $\M{G}_i \in \mathcal{G}_n$, known as stabilizer generators, where $\mathcal{G}_n$ is the $n$-qubit Pauli group \cite{Got:09, NieChu:10}. The subgroup of $\mathcal{G}_n$ generated by all combinations of the $\M{G}_i$ forms the stabilizer group $\mathcal{S}$. The code space $\mathcal{C}$ consists of all quantum states $\ket{\psi}$ stabilized by $\mathcal{S}$, i.e., satisfying $\M{G}_i \ket{\psi} = \ket{\psi}$ for all $i = 1, \dots, n - k$.
Let us define as $\gamma_i$ the weight of the $i$-th generator.
Operators that commute with all elements of $\mathcal{S}$ but are not in $\mathcal{S}$ are called logical operators. The stabilizer generators define measurements that preserve the encoded quantum state and are implemented using ancillary qubits. When an error $\M{E} \in \mathcal{G}_n$ acts on a codeword, resulting in the corrupted state $\M{E}\ket{\psi}$, a binary error syndrome $\V{s}$ is extracted, where $s_i = 0$ if $\M{G}_i$ commutes with $\M{E}$ and $s_i = 1$ if they anticommute. This syndrome provides the information needed to identify and correct the error.

An important class of stabilizer codes is the family of \ac{CSS} codes \cite{CalSho:96,Ste:96}, in which generators consist solely of Pauli $\M{X}$ operators or solely of Pauli $\M{Z}$ operators. 
These codes are of particular importance for practical implementations.
Indeed, the structure of CSS codes allows for the transversal application of a CNOT gate between two code blocks to be a fault-tolerant operation, effectively performing a logical CNOT between the corresponding encoded qubits.
Furthermore, for self-dual CSS codes (for which the $\M{X}$ and $\M{Z}$ generators coincide), applying a transversal Hadamard gate implements a logical Hadamard on the encoded qubits.
Moreover, since each stabilizer generator consists solely of either $\M{X}$ or $\M{Z}$ Pauli operators, $\M{X}$ and $\M{Z}$ errors can be decoded independently, which greatly reduces the overall complexity of the decoder.

\subsection{Flag based Syndrome Extraction }

Whenever a quantum operation is performed, such as a single-qubit gate, a two-qubit gate, a measurement, or a state preparation, a fault may occur due to physical imperfections, modeled as a Pauli error.

In the following, we adopt the definition of a \ac{FT} error-correction gadget
from~\cite{Got:09}. 
For clarity, since this work focuses on the
syndrome-extraction quantum circuit, we refer to it as a \ac{FT} syndrome-extraction gadget.

\begin{definition}[FT syndrome extraction gadget]
\label{def:FTEC}
A syndrome extraction procedure for a quantum code of distance $d$ is said to be \ac{FT} if it satisfies the following conditions.
If at most $s \le t = \lfloor (d-1)/2 \rfloor$ faults occur during the procedure and no additional faults occur afterward, then the output state differs from a valid encoded state by a data error of weight at most $s$.
In addition, if the input encoded state contains a data error of weight $r$ and at most $s$ additional faults occur during the procedure, with $r + s \le t$, then an ideal decoder successfully corrects the resulting state.
\end{definition}

Syndrome extraction schemes based on directly measuring each stabilizer of the code are, in general, not \ac{FT}.
Indeed, errors arising from two-qubit gates, particularly those occurring mid-circuit, can propagate to higher-weight errors on the data qubits, as illustrated in red in Fig.~\ref{Fig:Flags}.
Such errors are commonly known as hook errors.

One way to address this issue is through the use of flag qubits: additional qubits entangled with the syndrome qubit such that hook errors are also propagated to the flag qubits \cite{cha18:FirstFLagFT, cha20:FLagAllStabCodes}.
By observing the resulting flag pattern, one can approximately localize the fault and correct the resulting data qubit errors accordingly.
An example of flag-based syndrome extraction scheme, for measuring an operator acting on $\gamma = 4$ qubits, is shown in Fig.~\ref{Fig:Flags}.
In this circuit, an $\M{X}$ error occurring after the second CNOT gate interacting with the data qubits propagates to a weight $w = 2$ error on the data.
However, the same error is also transferred to a flag qubit.
Once flag qubits are measured, a suitable correction can be applied such that no fault ever results in a higher-weight error (up to a stabilizer).
In the figure, the correction operation shown in blue mitigates the errors in the data qubits, rendering the procedure fault-tolerant. 
Specifically, a single error (in red) occurring in the ancilla qubits propagates to only one data qubit.

Moreover, measurement errors on the syndrome qubits may lead to an incorrect syndrome and the consequent application of an erroneous correction operator pattern, resulting in a non-\ac{FT} syndrome extraction gadget.
This issue is typically addressed by repeating the syndrome extraction measurement until $t+1$ identical syndromes are obtained consecutively~\cite{Sho96:CatState, TanBal23:t+1RepeatedSyndromeExraction}.
In this way, if $w \leq t$ measurement errors occur during the gadget, the final syndrome is guaranteed to be the correct one.
Taken together, these two solutions yield a \ac{FT} syndrome extraction gadget, according to Definition~\ref{def:FTEC}.

\begin{figure}[t]
 	\centering
 	\resizebox{0.7\columnwidth}{!}{
 	   \includegraphics{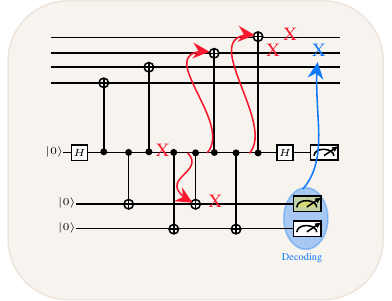}
     }
 	\caption{Flag $t$-\ac{FT} syndrome extraction: A single $\M{X}$ error on the syndrome qubit propagates to a weight $w = 2$ error on the data qubits.
    This error is also transferred to a flag qubit, enabling it to be detected and corrected.}
 	\label{Fig:Flags}
\end{figure}

\subsection{Cat State Based Syndrome Extraction }
\label{subsec:CatFTSE}

An alternative scheme for a \ac{FT} syndrome extraction gadget employs cat states, also known as GHZ states, of the form $\ket{00\dots0} + \ket{11\dots1}$ \cite{Sho96:CatState}.
The cat state can be prepared fault-tolerantly by measuring its $\M{X}$-type stabilizer generator using, for example, flag-based gadgets.
Alternatively, it can be prepared via post-selection methods~\cite{ForAma25:FTCSS}.
Each qubit of the cat state then interacts with a corresponding data qubit via a two-qubit gate, with the cat qubit acting as control, according to the operator being measured.
Finally, a transversal Hadamard gate followed by measurement in the computational basis is applied to the cat state.
By examining the parity of the measurement outcomes, the eigenvalue of the measured operator can be determined.
In this way, any single error occurring on a cat-state qubit propagates to at most a single data qubit.
By repeating the syndrome extraction measurement until $t+1$ identical syndromes are obtained consecutively, the resulting syndrome extraction gadget is \ac{FT}.

The main advantage of a flag-based syndrome extraction scheme is that it requires fewer qubits to be active simultaneously.
In particular, it uses only a single syndrome qubit, unlike cat-state schemes, which require as many qubits as the weight of the operator being measured.
On the other hand, a key advantage of using cat states is that they can be prepared offline, potentially with the aid of detection circuits that require fewer flag qubits.
Moreover, the portion of the circuit that interacts with the data qubits can be executed fully in parallel, which helps mitigate memory errors.

\section{Cut-Cat State FT Syndrome Extraction}
\label{sec:Cat?FT}

\begin{figure*}[t]
 	\centering
 	\resizebox{\textwidth}{!}{ \includegraphics{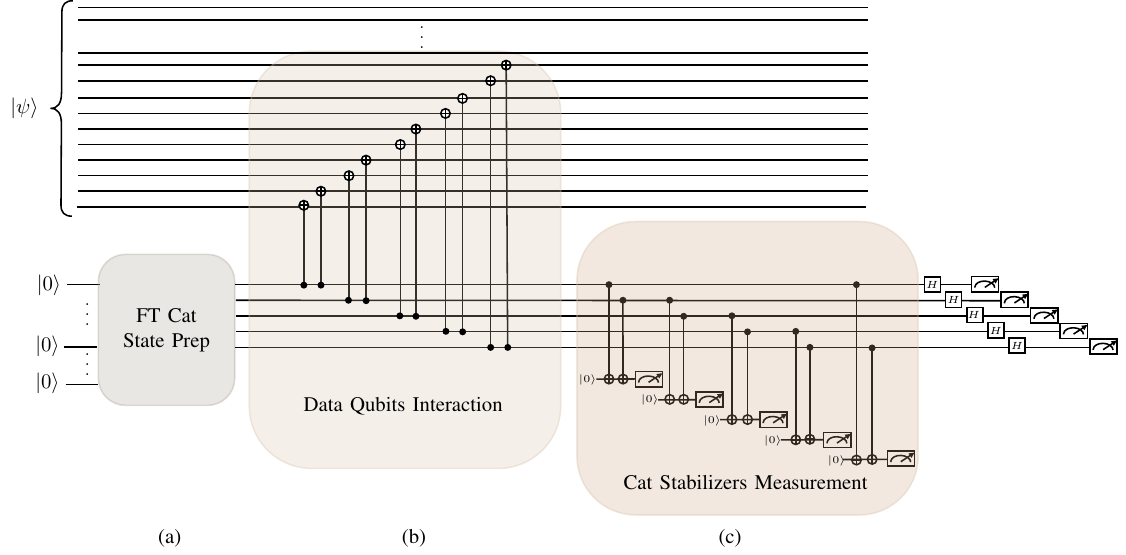} }
 	\caption{Cut-cat state $2$-\ac{FT} syndrome extraction protocol for an $\M{X}$-type stabilizer of weight $\gamma_i = 10$ in a code of distance $d = 5$.
    a) A cat state consisting of $\gamma_i / 2$ qubits is prepared.
    b) Each cat qubit acts as the control in two two-qubit gates that interact with data qubits.
    c) A single round of cat stabilizer measurements is performed.}
 	\label{Fig:CutCatState}
\end{figure*}

In this section, we describe the proposed cut-cat state $t$-\ac{FT} syndrome extraction protocol.
Fig.~\ref{Fig:CutCatState} illustrates an example of  syndrome extraction for an $\M{X}$-type stabilizer generator of weight $\gamma_i = 10$, in a code with distance $d = 5$.
Our proposed protocol can be summarized as follows.
First, a cat state of $\gamma_i / 2$ qubits is prepared fault-tolerantly, as described in Section~\ref{subsec:CatFTSE}.
Next, each cat qubit interacts with two different data qubits via two-qubit gates, arranged according to the structure of the generator being measured.
Finally, the protocol performs a variable number of rounds of $\M{Z}_i\M{Z}_j$ cat stabilizer measurements, where $i = 0, \dots, \gamma_i/2$ and $j = (i + 1) \bmod \left( \gamma_i / 2 \right)$, with the number of rounds determined by the code distance $d$.
We remark that it is possible to optimize the scheme by adaptively selecting a subset of cat stabilizers to measure after the first round, based on the outcomes obtained.
This adaptive approach can reduce the total number of measurements required, as we will show for code distance $d=7$.
The specific patterns of cat stabilizers to be measured are discussed in Section~\ref{sec:Decoding} and Section~\ref{sec:Proofs}.
Importantly, for the technique to be $t$-\ac{FT}, it is necessary to have at  least $d$ cat qubits.
The $d$ cat stabilizers form a ring of parity checks, and decoding proceeds by pairing the triggered checks and connecting them along the minimum path, as in a repetition code.
Having fewer than $d$ parity checks would not provide enough redundancy to guarantee that, for any pattern of up to $t$ errors, the corresponding syndromes can always be paired and connected consistently for correction.

\begin{figure}[t]
 	\centering
 	\resizebox{0.9\columnwidth}{!}{
 	   \includegraphics{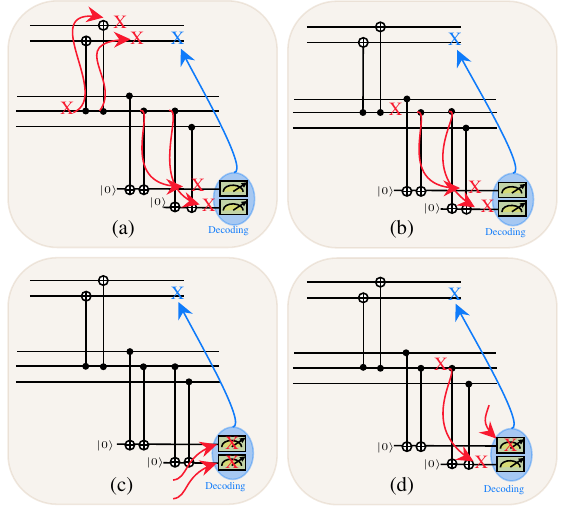}
     }
 	\caption{Error propagation in the cut-cat state syndrome extraction scheme.
    a) An $\M{X}$ error occurring before the two-qubit gates interacting with the data qubits propagates to a weight $w = 2$ data error.
    b) An $\M{X}$ error occurring after the final two-qubit gate interacting with the data qubits.
    c) Two cat stabilizer measurement errors.
    d) A combination of a cat stabilizer measurement error and an error induced by the first of the final two two-qubit gates used in the cat stabilizer measurement.
    All these error configurations produce the same cat stabilizers outcomes.
    The correction consists of applying a single $\M{X}$ operator to one of the two involved data qubits.}
 	\label{Fig:ErrorsCutState}
\end{figure}

Within this structure, a single $\M{X}$ error occurring on a cat qubit propagates to two data qubits, as shown in Fig.~\ref{Fig:ErrorsCutState}a.
Such an error must therefore be corrected in order to have a $t$-\ac{FT} scheme.
In the absence of additional errors, this fault activates two cat state stabilizers, enabling identification and correction of the affected data qubits.
However, an error occurring on the same cat qubit after both two-qubit gates interacting with the data qubits leads to the same cat stabilizer measurement outcome as in the cases of: (i) two measurement errors, or (ii) a measurement error combined with an error on the same cat qubit induced by the first of the final two two-qubit gates used in the cat stabilizer measurement, as illustrated in Fig.~\ref{Fig:ErrorsCutState}b, Fig.~\ref{Fig:ErrorsCutState}c, and Fig.~\ref{Fig:ErrorsCutState}d. 
Hence, the correction consists of applying a single $\M{X}$ operator to one of the two data qubits identified by the triggered cat stabilizers.
In all the exemplified scenarios, the result is a single data qubit error caused by either one or two faults, thus satisfying the fault-tolerance conditions. 
More generally, the decoding strategies used to correct errors based on the cat stabilizer outcomes are detailed in Section~\ref{sec:Decoding}.

\begin{table*}[t]
    \centering
    \caption{Comparison between FT Syndrome Extraction Strategies with Cat State Error Correction}
    \label{tab:ErrCorr}
    \small
    \includegraphics[width=\textwidth]{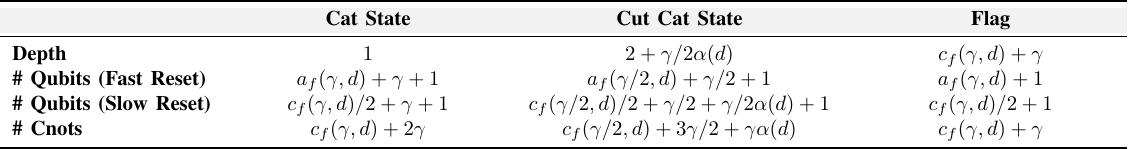}
\end{table*}

\begin{table*}[t]
    \centering
    \setlength{\tabcolsep}{3pt}
    \caption{Comparison between FT Syndrome Extraction Strategies with Cat State Error Detection}
    \label{tab:ErrDet}
    \small
    \includegraphics[width=\textwidth]{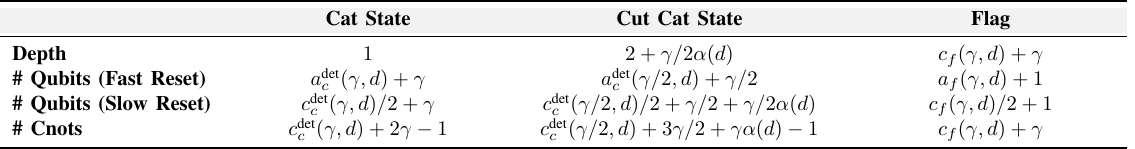}
\end{table*}

In Tab.~\ref{tab:ErrCorr} and Tab.~\ref{tab:ErrDet} we present a comparison between various $t$-\ac{FT} syndrome extraction protocols based on circuit depth, qubit count, and number of two-qubit gates.
We carry out two comparisons based on the cat state preparation. In Tab.~\ref{tab:ErrCorr} error correction-based preparation is assumed, while in Tab.~\ref{tab:ErrDet} the error detection-based is assumed.
In the error correction scenario, we assume that the cat state is prepared by fault-tolerantly measuring its $\M{X}$ stabilizer generator, which consists of a Pauli $\M{X}$ operator acting on each qubit of the cat state.
Specifically, $c_f(\rev{\gamma}, d)$ and $a_f(\rev{\gamma}, d)$ denote, respectively, the number of two-qubit gates interacting with flags and the number of simultaneous flag qubits required by a $t$-\ac{FT} syndrome extraction protocol using flag qubits for error correction to measure a stabilizer generator of weight $\rev{\gamma}$ in a code of distance $d$.
Similarly, $c_c^{\text{det}}(\rev{\gamma}, d)$ and $a_c^{\text{det}}(\rev{\gamma}, d)$ represent, respectively, the number of two-qubit gates interacting with flags and the number of simultaneous flag qubits required by a $t$-\ac{FT} cat state preparation based on quantum error detection, measuring a stabilizer generator of weight $\rev{\gamma}$.
Let us denote as $\alpha(d)$ the average number of cat stabilizer measurement per cat qubit.
For codes with distance $d = 3, 5, 7, 9$, we set $\alpha(d,\rev{\gamma}) = \lfloor (d+2)/4 \rfloor$ when a non-adaptive strategy is used for the cat stabilizer measurement.
As an example, under an adaptive strategy, the value of $\alpha(7,\rev{\gamma})$ can be reduced from $2$ to $1 + 8/\rev{\gamma}$ for $\rev{\gamma}>8$, as demonstrated in Section~\ref{sec:Proofs}. 
Moreover, we consider both the fast reset case, in which a qubit can be immediately reused after measurement, and the slow reset case, in which a qubit cannot be reset during the current syndrome extraction round.
From this comparison, it appears that the described $t$-\ac{FT} technique represents a trade-off between cat state-based and flag-based syndrome extraction schemes, while preserving their key advantages.
In the fast reset scenario, it requires less than half the number of simultaneous qubits compared to standard cat state syndrome extraction.
Moreover, if the physical implementation supports parallelization of two-qubit gates, the proposed scheme achieves a very low circuit depth compared with the respective flag-based syndrome extraction protocols.
Note that for cat-based approaches, since cat state preparation is entirely independent of the data qubits, we exclude the preparation circuit from the circuit depth analysis. 
However, we still account for both the required qubits and the number of two-qubit gates.
In Section~\ref{sec:NumRes}, we provide examples of the circuit depth, qubit count, and number of CNOT gates for various code distances and stabilizer generator weights.

In the case of an odd-weight operator, or an operator with weight $\rev{\gamma} < 2d$, the same technique can be applied by allowing one or more cat qubits to interact with a single data qubit.

\section{Decoding}
\label{sec:Decoding}
In this section, we introduce decoding algorithms tailored to the proposed $t$-\ac{FT} syndrome extraction scheme. 
For quantum codes with distance up to $d = 7$, we provide exact, rule-based decoding procedures. 
For codes of distance $d = 9$, we employ a \ac{LUT} approach to perform the decoding.
Since the proposed protocol is applicable to \ac{CSS} codes, we will assume throughout that we are measuring an $\M{X}$ Pauli operator.  
The decoding scheme for an $\M{Z}$ Pauli operator can be obtained analogously by interchanging the roles of the $\M{X}$ and $\M{Z}$ operators.
In the following, we refer to the cut-cat state syndrome extraction gadget in Sec.~\ref{sec:Cat?FT}. 
Without loss of generality, we assume that when measuring a generator of weight $\gamma_i/2$, the $r$-th cat qubit interacts with the data qubits $q_{2r}$ and $q_{2r + 1}$ for $r = 0, \dots, \gamma_i/2 - 1$.
Moreover, the set of cat stabilizers that measure the $r$-th cat qubit during the $i$-th round of cat stabilizer measurements is defined as $\mathcal{G}_{r,i}^{\text{cat}} = \left\{M_{r-1}^{(i)}, M_{r}^{(i)} \right\}$, where $M_{r}^{(i)}$ is the measurement outcome of the $r$-th cat stabilizer during round $i$-th.
Similarly, we refer to the set containing every cat stabilizer measurements as $\mathcal{G}^{cat}_{i}$.
Also, all subscript indices referred to cat stabilizers are to be interpreted modulo $\gamma_i/2$. 
For instance, the measurement outcome $M_{r+1}^{(i)}$, should be always understood as $M_{ (r + 1) \bmod \left( \gamma_i/2 \right)}^{(i)}$.

We define the set of ordered switched on cat stabilizers in the first round as $\mathcal{A} = \left\{a_0, \dots, a_{n_{\mathrm{a}-1}}\right\}$, where $n_{\mathrm{a}}$ is the total number of switched on cat stabilizers. 
Recall that the first round can be used to adaptively choose the measurements for the next round.
Given a generator of weight $\gamma_i$, we define a subset of cat qubit indices as
$\mathcal{I}_{m} = \left\{ (a_m + k) \bmod \left( \gamma_i/2 \right) : 0 \leq k \leq a_{m+1} - 1 - a_m \right\}.$
Moreover, given a collection of such sets, we define their union as
$\mathcal{I} = \bigcup_{m \text{ even}} \,\, \mathcal{I}_{m}$.
The set of complementary indices is defined by
$\mathcal{I}^c = \mathcal{I}_\Omega \setminus \mathcal{I}$, where $\mathcal{I}_\Omega$ denotes the set of all cat qubit indices.

The decoding algorithm adopted for codes of distance $d = 3$ and $d = 5$ is described in Algorithm~\ref{algo:D3/5}.
The decoding algorithm adopted for codes of distance $d = 7$ is described in Algorithms~\ref{algo:D7_1} to~\ref{algo:D7_6}.
For the sake of clarity, we subdivide this algorithm according to $|\mathcal{A}|$.
In writing the algorithm, we make use of lists of subsets of cat stabilizer indices, which we denote by $\mathcal{J}_{n_\mathrm{a}, k}$.
These lists are explicitly provided in Appendix~\ref{app:Lists7}.
The notation $M^{(i)}\left[\mathcal{J}\right]$ denotes the list of measurement outcomes associated with the cat stabilizer indices $\mathcal{J}$.
Therefore, equalities and inequalities on two lists $M^{(i)}\left[\mathcal{J}\right]$ should be interpreted elementwise; equalities hold only if all corresponding elements are equal, while inequalities hold if at least one corresponding pair is unequal.
We also define the function \texttt{inv()} which, given a binary string, returns its bitwise complement.

The decoder requires a single pass over a binary array of size $\gamma_i/2$, yielding $O(\gamma_i)$ time complexity.

\begin{algorithm}[t]
\SetKwInOut{Input}{input}
\SetKwInOut{Output}{output}
\caption{Distance Three/Five Code, $|\mathcal{A}|$ even} 
\label{algo:D3/5}
compute $\mathcal{I}$, $\mathcal{I}^c$;\\
\If{$|\mathcal{I}| < |\mathcal{I}^c|$}{
\If{$\left(|\mathcal{I}| \leq t \text{ and } |\mathcal{A}| \leq 2t\right)$ \text{or} $|\mathcal{A}| > 2t$}{
apply $\M{X}$ on $q_{2\mathcal{I} + 1}$;
}
} \Else {
\If{$\left(|\mathcal{I}^c| \leq t \text{ and } |\mathcal{A}| \leq 2t\right)$ \text{or} $|\mathcal{A}| > 2t$}{
apply $\M{X}$ on $q_{2\mathcal{I}^c + 1}$;
}
}
\end{algorithm} 

\begin{algorithm}[t]
\SetKwInOut{Input}{input}
\SetKwInOut{Output}{output}
\caption{Distance Seven Code, $|\mathcal{A}| = 1$} \label{algo:D7_1}
\uIf{$M^{(1)}\left[\mathcal{J}_{1, 0}(a_0)\right] \neq M^{(0)}\left[\mathcal{J}_{1, 0}(a_0)\right]$}
{apply $\M{X}$ on $q_{2 a_0 + 3}$;} 
\uElseIf{$M^{(1)}\left[\mathcal{J}_{1, 1}(a_0)\right] \neq M^{(0)}\left[\mathcal{J}_{1, 1}(a_0)\right]$}
{apply $\M{X}$ on $q_{2 a_0 + 1}$;
}
\end{algorithm} 

\begin{algorithm}[t]
\SetKwInOut{Input}{input}
\SetKwInOut{Output}{output}
\caption{Distance Seven Code, $|\mathcal{A}| = 2$} 
$\mathrm{flagDecoding \gets 1}$; \\
\uIf{$a_0 + 3 \in \mathcal{A}$}{
\uIf{$M^{(1)}\left[\mathcal{J}_{2, 0}(a_0)\right] \neq M^{(0)}\left[\mathcal{J}_{2, 0}(a_0)\right]$}
{$\mathrm{flagDecoding \gets 0}$; } 
}\uElseIf{$a_0 - 3 \in \mathcal{A}$}{
\uIf{$M^{(1)}\left[\mathcal{J}_{2, 1}(a_0)\right] \neq M^{(0)}\left[\mathcal{J}_{2, 1}(a_0)\right]$}
{$\mathrm{flagDecoding \gets 0}$; } 
}\uElseIf{$a_0 + 2 \in \mathcal{A}$}{
\uIf{$M_{a_0}^{(1)} \neq M_{a_0}^{(0)}$} {
\uIf{$M^{(1)}\left[\mathcal{J}_{2, 2}(a_0)\right] \neq M^{(0)}\left[\mathcal{J}_{2, 2}(a_0)\right]$}
{$\mathrm{flagDecoding \gets 0}$; } 
} \uElseIf{$M_{a_0 +2}^{(1)} \neq M_{a_0 + 2}^{(0)}$} {
\uIf{$M^{(1)}\left[\mathcal{J}_{2, 3}(a_0)\right] \neq M^{(0)}\left[\mathcal{J}_{2, 3}(a_0)\right]$}
{$\mathrm{flagDecoding \gets 0}$; } 
} \uElseIf{$M^{(1)}\left[\mathcal{J}_{2, 4}(a_0)\right] \neq M^{(0)}\left[\mathcal{J}_{2, 4}(a_0)\right]$}
{apply $\M{X}$ on $q_{2 a_0 + 3}$; \\
$\mathrm{flagDecoding \gets 0}$; }
}
\uIf{$\mathrm{flagDecoding = 1}$}{
compute $\mathcal{I}$, $\mathcal{I}^c$;\\
\uIf{$|\mathcal{I}| \leq 3$}{
apply $\M{X}$ on $q_{2\mathcal{I} + 1}$;
} \uElseIf{$|\mathcal{I}^c| \leq 3$} {
apply $\M{X}$ on $q_{2\mathcal{I}^c + 1}$;
}
}
\end{algorithm} 

\begin{algorithm}[t]
\SetKwInOut{Input}{input}
\SetKwInOut{Output}{output}
\caption{Distance Seven Code, $|\mathcal{A}| = 3$}  \label{algo:D7_3}
$\mathrm{flagDecoding \gets 1}$; \\
\uIf{$\{a_0 + 2, a_0 + 4\} \in \mathcal{A}$}{
\uIf{$\texttt{inv}\left(M^{(1)}\left[\mathcal{J}_{3, 0}(a_0)\right]\right) =  M^{(0)}\left[\mathcal{J}_{3, 0}(a_0)\right]$} {apply $\M{X}$ on $q_{2 (a_0 + 3) + 3}$;
} \uElse{
{apply $\M{X}$ on $q_{2 a_0 + 3}$; 
} 
}
$\mathrm{flagDecoding \gets 0}$;
}
\uElseIf{$\{a_0 + 1, a_0 + 2\} \in \mathcal{A}$}
{
\uIf{$M_{a_0 +1}^{(1)} \neq M_{a_0 +1}^{(0)}$}{
apply $\M{X}$ on $q_{2 (a_0 + 1) + 1}$;
}
\uElseIf{$M_{a_0 -1}^{(1)} \neq M_{a_0 -1}^{(0)}$}{
apply $\M{X}$ on $q_{2 (a_0 + 1) + 3}$;
}
\uElseIf{$M_{a_0}^{(1)} = M_{a_0}^{(0)}$}{
\uIf{$M_{a_0 - 2}^{(1)} \neq M_{a_0 - 2}^{(0)}$}{
apply $\M{X}$ on $q_{2 (a_0 + 1) + 3}$;
} 
\uElse{ 
apply $\M{X}$ on $q_{2 (a_0 + 1) + 1}$;
}
}
\uElse{
apply $\M{X}$ on $q_{2 (a_0 + 1) + 3}$;
}
$\mathrm{flagDecoding \gets 0}$;
}
\uElseIf{$\{a_0 + 2, a_0 + 3\} \in \mathcal{A}$}
{
\uIf{$M^{(1)}\left[\mathcal{J}_{3, 1}(a_0)\right] \neq M^{(0)}\left[\mathcal{J}_{3, 1}(a_0)\right]$}
{
apply $\M{X}$ on $q_{2 (a_0 + 1) + 1}$; \\
$\mathrm{flagDecoding \gets 0}$;
}
}
\uElseIf{$\{a_0 + 1, a_0 + 3\} \in \mathcal{A}$}
{
\uIf{$M^{(1)}\left[\mathcal{J}_{3, 2}(a_0)\right] \neq M^{(0)}\left[\mathcal{J}_{3, 2}(a_0)\right]$}
{
apply $\M{X}$ on $q_{2 (a_0 + 1) + 3}$; \\
$\mathrm{flagDecoding \gets 0}$;
}
}
\uIf{$\mathrm{flagDecoding = 1}$}
{
\ForAll{$g \in \mathcal{G}^{cat}_0$}
{
\uIf{$\{g, g+1\} \in \mathcal{A}$}
{
apply $\M{X}$ on $q_{2 g + 3}$; \\
$\mathrm{flagDecoding \gets 0}$;
}
}
}
\uIf{$\mathrm{flagDecoding = 1}$}
{
\ForAll{$g \in \mathcal{G}^{cat}_0$}
{
\uIf{$\{g, g+2\} \in \mathcal{A}$}
{
apply $\M{X}$ on $q_{2 g + 3}$; 
}
}
}
\end{algorithm} 

\begin{algorithm}[t]
\SetKwInOut{Input}{input}
\SetKwInOut{Output}{output}
\caption{Distance Seven Code, $|\mathcal{A}| = 4$} \label{algo:D7_4}
$\mathrm{flagDecoding \gets 1}$; \\
\uIf{$a_0 + 2 \in \mathcal{A}$}{
\uIf{$a_0 + 1 \in \mathcal{A}$}{
\uIf{$a_0 + 3 \in \mathcal{A}$}{
\uIf{$M^{(1)}\left[\mathcal{J}_{4, 0}(a_0)\right] \neq M^{(0)}\left[\mathcal{J}_{4, 0}(a_0)\right]$} {
\uIf{$M^{(1)}\left[\mathcal{J}_{4, 1}(a_0)\right] \neq M^{(0)}\left[\mathcal{J}_{4, 1}(a_0)\right]$} {
$\mathrm{flagDecoding \gets 0}$;
}
}
}
\uElseIf{$a_0 + 4 \in \mathcal{A}$} {
\uIf{$M^{(1)}\left[\mathcal{J}_{4, 2}(a_0)\right] \neq M^{(0)}\left[\mathcal{J}_{4, 2}(a_0)\right]$} 
{
$\mathrm{flagDecoding \gets 0}$;
}
}
}
\uElseIf{$\{a_0 +3, a_0 + 4\} \in \mathcal{A}$}
{
\uIf{$M^{(1)}\left[\mathcal{J}_{4, 3}(a_0)\right] \neq M^{(0)}\left[\mathcal{J}_{4, 3}(a_0)\right]$}{
$\mathrm{flagDecoding \gets 0}$;
}
}
}
\uIf{$\mathrm{flagDecoding = 1}$}{
compute $\mathcal{I}$, $\mathcal{I}^c$;\\
\uIf{$|\mathcal{I}| \leq 3$}{
apply $\M{X}$ on $q_{2\mathcal{I} + 1}$;
} \uElseIf{$|\mathcal{I}^c| \leq 3$} {
apply $\M{X}$ on $q_{2\mathcal{I}^c + 1}$;
}
}

\end{algorithm} 

\begin{algorithm}[t]
\SetKwInOut{Input}{input}
\SetKwInOut{Output}{output}
\caption{Distance Seven Code, $|\mathcal{A}| = 5$} \label{algo:D7_5}
$\mathrm{flagDecoding \gets 1}$; \\
\uIf{$\{a_0 +1, a_0 + 2,a_0 +3, a_0 + 4\} \in \mathcal{A}$}{
\uIf{$M^{(1)}\left[\mathcal{J}_{5, 0}(a_0)\right] = M^{(0)}\left[\mathcal{J}_{5, 0}(a_0)\right]$}
{
apply $\M{X}$ on $q_{2 (a_0 + 3) + 3}$;
}
\uElse{
apply $\M{X}$ on $q_{2 (a_0 + 2) + 3}$;
}
$\mathrm{flagDecoding \gets 0}$; 
}
\uIf{
$\mathrm{flagDecoding = 1}$
}{
\ForAll{$g \in \mathcal{G}^{cat}_0$}
{
\uIf{$\{g, g+1\} \in \mathcal{A}$}
{
\uIf{$\{g -1, g+2\} \notin \mathcal{A}$ \text{or} $\{g +2, g+3\} \in \mathcal{A}$}{
apply $\M{X}$ on $q_{2 (g+1) + 1}$;
}
}
}
}
\end{algorithm} 

\begin{algorithm}[t]
\SetKwInOut{Input}{input}
\SetKwInOut{Output}{output}
\caption{Distance Seven Code, $|\mathcal{A}| \geq 6$, $|\mathcal{A}|$ even} \label{algo:D7_6}
compute $\mathcal{I}$, $\mathcal{I}^c$;\\
\If{$|\mathcal{I}| < |\mathcal{I}^c|$}{
\If{$\left(|\mathcal{I}| \leq 3 \text{ and } |\mathcal{A}| = 6\right)$ \text{or} $|\mathcal{A}| > 6$}{
apply $\M{X}$ on $q_{2\mathcal{I} + 1}$;
}
} \Else {
\If{$\left(|\mathcal{I}^c| \leq 3 \text{ and } |\mathcal{A}| = 6\right)$ \text{or} $|\mathcal{A}| > 6$}{
apply $\M{X}$ on $q_{2\mathcal{I}^c + 1}$;
}
}
\end{algorithm}

\section{Fault-Tolerant Syndrome Extraction Gadget}
\label{sec:Proofs}
In this section, we prove the fault-tolerance of the proposed syndrome extraction scheme when decoded by using algorithms described in Sec.~\ref{sec:Decoding}. 
In doing so, we consider only errors of weight $w \leq t$, as these are the ones that must be controlled to ensure fault-tolerance. 

We first show that errors of weight $w \leq t$ never propagate to higher-weight data errors during syndrome extraction using the cut-cat scheme.
We then prove that repeating the syndrome extraction measurement until $t+1$ identical syndromes are obtained consecutively yields a \ac{FT} syndrome extraction gadget.
Hence, in Theorems~\ref{th:FTdist3}–\ref{th:FTdist7} we will focus on $(i)$ errors occurring before the start of the circuit on each cat qubit,
$(ii)$ errors on the control qubit of each two-qubit gate, occurring after the gate's execution, and
$(iii)$ errors occurring just before each cut-cat measurement.
These fault locations include all errors that can propagate to higher-weight data errors or modify the cut-cat measured syndrome during the execution of the circuit.
Indeed, note that errors on data qubits present prior to syndrome extraction, or errors generated on the target of a two-qubit gate, never propagate.
Finally, in Theorem~\ref{th:FTSE} we will also account for faults affecting the measurement outcomes of the syndrome qubits corresponding to the generator.

\subsection{Distance Three}

\begin{theorem}
\label{th:FTdist3}

For a quantum code with $d = 3$, the technique proposed in Sec.~\ref{sec:Cat?FT}, using a single round of cat stabilizer measurements, ensures that any single fault never propagates to a data error of weight greater than one.

\end{theorem}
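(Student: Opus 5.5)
We prove the statement by enumerating every single-fault location singled out at the beginning of this section. For each location we track how the fault propagates through the circuit of Fig.~\ref{Fig:CutCatState}. We then check that Algorithm~\ref{algo:D3/5}, with $t=1$, leaves a residual data error of weight at most one, up to the generator $\M{G}_i$ itself.

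Throughout we consider an $\M{X}$-type generator, so the relevant faults are $\M{X}$ errors on cat qubits. $\M{Z}$ errors on cat qubits do not propagate to data through the cat-controlled gates; at most they flip the generator outcome, and that is handled later by the repetition argument of Theorem~\ref{th:FTSE}. Let $L=\gamma_i/2\ge d=3$ be the number of cat qubits. An $\M{X}$ error on cat qubit $r$ anticommutes with exactly the two ring stabilizers $M_{r-1}$ and $M_r$.

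\textbf{Location (i): before the data interactions.} An $\M{X}$ on cat qubit $r$ propagates to $\M{X}_{q_{2r}}\M{X}_{q_{2r+1}}$ and triggers $\mathcal{A}=\{r-1,r\}$, so $|\mathcal{A}|=2=2t$. By the definition of $\mathcal{I}_m$, $\mathcal{I}=\mathcal{I}_0$ is the arc between the two triggered checks. One of $\mathcal{I},\mathcal{I}^c$ is the singleton $\{r\}$ and the other has size $L-1\ge 2$. Hence the algorithm selects the singleton and applies $\M{X}$ on $q_{2r+1}$, which leaves the weight-one residual $\M{X}_{q_{2r}}$. I expect the main obstacle to be this index bookkeeping: confirming that, with the cyclic ordering of $a_0,a_1$ and the wrap-around, the chosen set really is $\{r\}$ in both orientations.

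\textbf{Location (ii): between the two data gates and after them.} A fault after the first data gate of cat qubit $r$ yields $\M{X}_{q_{2r+1}}$ and the same syndrome $\{r-1,r\}$. The correction $\M{X}_{q_{2r+1}}$ then cancels it completely. A fault after the second data gate leaves no data error, but the same syndrome appears; the correction then introduces a single $\M{X}$, which is weight one. This is exactly the case of Fig.~\ref{Fig:ErrorsCutState}b.

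\textbf{Location (iii): during the cat-stabilizer round.} A single fault there is either a flipped measurement outcome or a propagated error inside the ring-measurement gates. A flipped outcome gives $|\mathcal{A}|=1$, which is odd, so Algorithm~\ref{algo:D3/5} applies nothing and the data are untouched. A propagated error on a cat qubit mid-round flips at most two adjacent checks, so any resulting correction has weight at most one, as in Fig.~\ref{Fig:ErrorsCutState}d. For general $L$ one must also verify that such a late error cannot trigger two non-adjacent checks; this follows because each ring ancilla touches only cat qubits $j$ and $j+1$.

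Combining the three cases gives residual weight at most one for every single fault, which proves the theorem.
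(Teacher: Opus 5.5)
Your case analysis follows the paper's proof. You use the same three fault classes. You observe, as the paper does, that an $\M{X}$ fault on cat qubit $r$ before, between, or after its two data gates always gives the syndrome $\{M_{r-1},M_r\}$. A single $\M{X}$ on $q_{2r+1}$ then leaves $\M{X}_{q_{2r}}$, nothing, or $\M{X}_{q_{2r+1}}$, respectively. An isolated triggered check is ignored. You also correctly fix the gate order needed for the mid-gate case, with cat qubit $r$ coupling to $q_{2r}$ before $q_{2r+1}$.

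The one place you depart from the paper is deriving the correction from Algorithm~\ref{algo:D3/5} through $\mathcal{I}_m$. That is exactly the bookkeeping step you flagged, and it does not go through as written.

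\begin{itemize}
\item The stated definition gives $\mathcal{I}_m=\{a_m,a_m+1,\dots,a_{m+1}-1\}$.
\item For $\mathcal{A}=\{r-1,r\}$ with $1\le r\le L-1$, this yields $\mathcal{I}=\{r-1\}$. The algorithm would then apply $\M{X}$ on $q_{2r-1}$. A pre-interaction fault would leave $\M{X}_{q_{2r-1}}\M{X}_{q_{2r}}\M{X}_{q_{2r+1}}$, of weight three.
\item For $r=0$ one gets $\mathcal{A}=\{0,L-1\}$ and $\mathcal{I}^c=\{L-1\}$, again the wrong cat qubit.
\end{itemize}

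The definition is off by one relative to the convention $M_r=\M{Z}_r\M{Z}_{r+1}$, equivalently $\mathcal{G}^{\text{cat}}_{r,i}=\{M_{r-1}^{(i)},M_r^{(i)}\}$, which both you and the paper use. The cat qubits lying between triggered checks $a_m$ and $a_{m+1}$ are $a_m+1,\dots,a_{m+1}$. That is what the explicit rules implement, e.g.\ ``$\{g,g+1\}\subseteq\mathcal{A}$ $\Rightarrow$ $\M{X}$ on $q_{2g+3}$'' in Algorithm~\ref{algo:D7_3}.

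The paper's proof of this theorem avoids the issue by stating the decoding rule directly: when the two triggered checks both act on cat qubit $r$, correct $q_{2r+1}$. You can do the same. Alternatively, explicitly adopt the shifted reading $\mathcal{I}_m=\{a_m+1,\dots,a_{m+1}\}$. With that reading your ``both orientations'' check goes through, since the singleton is $\{r\}$ for every $r$, including $r=0$ where $\mathcal{I}^c=\{0\}$. With either fix, your argument is complete and coincides with the paper's.
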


\begin{proof}
When measuring a code generator, a single error may cause one or two cat stabilizer measurements to trigger.
Specifically, an error occurring before the interaction with the data qubits can propagate into a weight $w = 2$ data error, which anticommutes with a pair of cat state stabilizer generators.
However, the same syndrome pattern may also be caused by an error on the same cat qubit occurring after its first two-qubit gate interaction with a data qubit, or even after both interactions.
As a result, when a pair of anticommuting cat state generators is triggered and both act on the $r$-th cat qubit, we conservatively correct an error on the data qubit $q_{2r + 1}$.
In this way, if the error occurs before the interaction with the data qubits, one of the two propagated errors is corrected; if it occurs between the two-qubit gates, the single propagated error is corrected; and if it occurs after the two-qubit gates, only a single error is introduced. In all three cases, a single fault results in at most one error on the data qubit.
As a final possibility, an error may occur during a cat stabilizer measurement. In this case, a single cat syndrome measurement is triggered. 
To avoid introducing a data error, we simply ignore it and do not apply any correction. 
All other possible cat syndrome patterns must arise from errors of weight $w>1$ during the execution of the syndrome extraction gadgets. 
Consequently, they do not affect the fault-tolerance analysis for $d=3$.
\end{proof}

\subsection{Distance Five}

\begin{theorem}
\label{th:FTdist5}
For a quantum code with $d = 5$, the technique proposed in Sec.~\ref{sec:Cat?FT}, using a single round of cat stabilizer measurements, ensures that any fault of weight $w \leq~2$ never propagates to a data error of weight greater than $w$.
\end{theorem}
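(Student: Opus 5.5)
We argue by direct enumeration of fault patterns of weight $w \le 2$ on the fault locations (i)--(iii) listed at the start of this section, and check that Algorithm~\ref{algo:D3/5} with $t=2$ leaves a residual data error of weight at most $w$. The key structural observation is the one used for Theorem~\ref{th:FTdist3}. Any $\M{X}$ fault on cat qubit $r$ at locations (i) or (ii) leaves a data error on a contiguous subset of $\{q_{2r},q_{2r+1}\}$, up to the whole pair. It also flips both cat stabilizers $M_{r-1}^{(0)}$ and $M_r^{(0)}$. A fault of type (iii) flips exactly one cat stabilizer, and leaves no data error except possibly a single $\M{X}$ on a cat qubit after its data interactions. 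Since cat qubits all lie in the cat state $\ket{0\dots0}+\ket{1\dots1}$, an $\M{X}$ on every cat qubit is trivial. Hence the propagated data error is determined, up to the stabilizer being measured (the $\M{X}$ on all $2\cdot(\gamma_i/2)$ data qubits), by a set of cat qubits $\mathcal{F}$ each contributing $0$, $1$ or $2$ data errors.

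The first step is to show that the triggered set $\mathcal{A}$ has even size, and that the segments $\mathcal{I}_m$ built from consecutive pairs $(a_m,a_{m+1})$ are exactly the runs of cat qubits whose ``flip parity'' differs from their neighbours. Thus $\mathcal{I}$ or $\mathcal{I}^c$ coincides with the set of flipped cat qubits, modulo measurement faults that cut the ring at a different point. This is the repetition-code ring picture from Sec.~\ref{sec:Cat?FT}. Since there are at least $d=5$ cat qubits, any configuration produced by at most two faults has $|\mathcal{A}|\le 4 = 2t$. The smaller of $\mathcal{I},\mathcal{I}^c$ then has size at most $2$, because each fault contributes at most one boundary per side.

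Next we enumerate cases. With $w=1$ the argument is exactly Theorem~\ref{th:FTdist3}. Correcting $q_{2r+1}$ for $\mathcal{I}=\{r\}$ leaves at most one data error, and a lone measurement fault gives $|\mathcal{A}|=1$. Since $|\mathcal{A}|$ is odd there, it falls outside Algorithm~\ref{algo:D3/5}, and no correction is applied. For $w=2$ we split by fault type:
\begin{enumerate}[label=(\alph*)]
\item two cat-qubit faults on distinct qubits;
\item two cat-qubit faults on the same qubit, whose effects cancel in the syndrome and leave at most $2$ data errors, so the decoder does nothing;
\item one cat-qubit fault plus one measurement fault;
\item two measurement faults.
\end{enumerate}
In each case we compute $\mathcal{A}$, determine the smaller segment set, and check the residual. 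Correcting $q_{2r+1}$ for every $r$ in the chosen segment leaves at most one residual error per genuine faulty cat qubit. It introduces at most one error per index $r$ that was ``falsely'' included because of a measurement fault. Each such false index costs a fault, so the residual weight is at most $w$. Adjacent faulty cat qubits require merging segments. The condition $|\mathcal{I}|\le t$ prevents long segments, which arise only from combinations of two measurement errors far apart, from being corrected with weight greater than $2$.

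The main obstacle is case (d) and the mixed case (c) when the segment chosen by the decoder wraps around the ring. Two measurement faults at stabilizers $a_0,a_1$ produce $\mathcal{I}$ of length $a_1-a_0$, which may exceed $2$. The decoder must then either pick the complement or refrain from correcting, and either choice must introduce at most $2$ errors. I would handle this by using $\gamma_i/2 \ge d = 5$: the shorter of $\mathcal{I},\mathcal{I}^c$ has length at most $\lfloor \gamma_i/4\rfloor$. When that exceeds $t$, the $|\mathcal{I}|\le t$ test blocks correction, and no data error is created. When it is at most $t$, the correction has weight at most $2 = w$. I would check this carefully in the configurations where a genuine cat fault sits adjacent to a measurement fault, since those reproduce the degenerate syndromes of Fig.~\ref{Fig:ErrorsCutState}.
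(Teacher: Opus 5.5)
Your plan is the same exhaustive check the paper performs. The paper enumerates by the number, $0$ to $4$, of triggered cat stabilizers rather than by fault type, and your per-case conclusions are right. However, the structural facts you propose to organize the check around are false, and the case you leave open is the one the correct fact settles. First, $|\mathcal{A}|$ is not even. Its parity equals the number of \emph{measurement-like} faults: type-(iii) faults, plus cat-qubit faults occurring between the two cat-stabilizer CNOTs of that qubit. The latter flip only one of $M_{r-1}^{(0)},M_r^{(0)}$ and leave no data error (Fig.~\ref{Fig:ErrorsCutState}d), so your claim that every type-(ii) fault flips both checks is also wrong. Likewise, ``the smaller of $\mathcal{I},\mathcal{I}^c$ has size at most $2$'' fails for two distant measurement faults, as you later notice yourself. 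With the parity invariant, case (c) needs no wrap-around analysis. There $|\mathcal{A}|\in\{1,3\}$, so Algorithm~\ref{algo:D3/5} is not invoked, and the residual is the raw propagated error, of weight at most $2$. This is exactly how the paper disposes of its one- and three-triggered cases. Even $|\mathcal{A}|$ with $w\le 2$ means either no measurement-like fault, or two of them and no data error. In the latter case only the test $|\mathcal{I}|\le t$ bounds the residual: two such faults at cyclic distance $\delta$ create $\delta$ spurious indices, so ``one false index per fault'' is not the right accounting.

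You also use the hypothesis $\gamma_i/2\ge d$ in the wrong place. The bound $|\mathcal{A}|\le 4$ holds for any number of cat qubits, and case (d) does not need the hypothesis either. It is needed when all faults are genuine cat faults, to ensure that the side containing the faulty cat qubits (size at most $2$) is strictly smaller than its complement (size at least $3$), so the decoder corrects the right side. With only $4$ cat qubits, take faults on cat qubits $\{0,2\}$ or on $\{1,3\}$, in each pair one before the data CNOTs and one after them but before the cat-stabilizer CNOTs. Both patterns trigger the same four checks with $|\mathcal{I}|=|\mathcal{I}^c|=2$. Whichever side is corrected, one of the two patterns is left with a weight-$4$ residual, even modulo the measured stabilizer.

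Your $w=1$ step needs two further details. First, the data CNOT onto $q_{2r+1}$ must come last, so that a fault between the two data CNOTs leaves $\M{X}$ on $q_{2r+1}$. With the opposite order, correcting $q_{2r+1}$ turns one fault into a weight-$2$ error; ``contiguous subset'' of a two-element set does not pin this down. Second, under the literal definition $\mathcal{I}_m=\{a_m,\dots,a_{m+1}-1\}$, a fault on cat qubit $r$ gives $\mathcal{I}=\{r-1\}$, not $\{r\}$. You are implicitly, and correctly, using the shifted segments $\{a_m+1,\dots,a_{m+1}\}$. These match the paper's rule of correcting $q_{2r+1}$ when $M_{r-1}^{(0)}$ and $M_r^{(0)}$ fire; state this explicitly.
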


\begin{proof}
When measuring a code generator, an error of weight $w \leq 2$ may cause up to four cat stabilizer measurements to trigger.
To proceed with the derivation, we divide the analysis into cases based on the number of triggered cat stabilizers.

If no cat stabilizers are triggered, then: $i)$ either no error has occurred; $ii)$ at least two errors have occurred on the same cat qubit prior to measurement; $iii)$ two errors have canceled each other out without interacting with the data qubits.
For both cases $i)$ and $iii)$, no correction is required.
In the worst-case scenario highlighted by $ii)$, an error occurring before the interaction with the data qubits may propagate to a data error of weight $w = 2$, which is then canceled by a second error occurring after both 2-qubit gate interactions with data qubits. 
In this case, no correction is applied, since an error of weight two is propagated to a data error of weight $w = 2$.
Hence, when no cat stabilizers are triggered, no correction is applied.

If a single cat stabilizer is triggered, the cause is either a single measurement error, or a combination of an error on a cat qubit and a measurement error.
In the worst-case scenario, an error occurring before the interaction with the data qubits may propagate to a data error of weight $w = 2$. This would normally trigger two adjacent cat stabilizer measurements; however, one of them fails to register due to a measurement error.
In this case, no correction is applied, since an error of weight two is propagated to a data error of weight $w = 2$.

If two cat stabilizers are triggered, this may result from either a single or double cat qubit error, or from two measurement errors.
In particular, if the triggered measurement outcomes are $M_{r-1}^{(i)}$ and $M_{r}^{(i)}$ (i.e., are adjacent), we apply a correction on the data qubit $q_{2r+1}$.
On the other hand, if the triggered stabilizers are $M_{r-1}^{(i)}$ and $M_{r+1}^{(i)}$, we apply corrections on the data qubits $q_{2r+1}$ and $q_{2(r+1) +1}$.
In this way, if two cat qubit errors occur before the interaction with the data qubits, they propagate to four data qubits. 
Two of these will be corrected, leaving a residual data error of weight $w = 2$.
Recall that the number of cat qubits must be at least equal to the code distance, so that any error of weight $w \leq 2$ can always be correctly identified by the cat stabilizer measurements.
In all other cases, no correction is applied. This situation arises, for example, when two measurement errors occur.

If three cat stabilizers are triggered, this must be due to a cat qubit error combined with a measurement error. 
In this case, an error of weight $w = 2$ may propagate to a data error of weight $w \leq 2$, and therefore no correction is applied.

Finally, if four cat stabilizers are triggered, this must be due to two non-adjacent cat qubit errors. 
Specifically, errors occurring on the $r$-th and $u$-th cat qubits will trigger the stabilizers $M_{r-1}^{(i)}$, $M_{r}^{(i)}$, $M_{u-1}^{(i)}$, and $M_{u}^{(i)}$. 
In this case, a correction must be applied to the data qubits $q_{2r+1}$ and $q_{2u+1}$.

\end{proof}

\subsection{Distance Seven}

\begin{theorem}
\label{th:FTdist7}
For a quantum code with $d = 7$, the technique proposed in Sec.~\ref{sec:Cat?FT}, using a single round of cat stabilizer measurements, along with up to four additional cat stabilizer measurements chosen adaptively based on the outcomes of the first round, ensures that any fault of weight $w \leq~3$ never propagates to a data error of weight greater than $w$.

\end{theorem}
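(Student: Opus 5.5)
The proof follows the template of Theorems~\ref{th:FTdist3} and~\ref{th:FTdist5}: a case analysis on $n_{\mathrm{a}}=|\mathcal{A}|$, the number of cat stabilizers triggered in the first round. For each case we verify that Algorithms~\ref{algo:D7_1}--\ref{algo:D7_6} return a correction $C$ such that, for every fault pattern of weight $w\le 3$ consistent with the observed outcomes, the residual data error $E\,C$ has weight at most $w$.

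We start by fixing an error model on a single cat qubit $r$, using only the three fault locations $(i)$--$(iii)$ listed at the start of this section. A fault on cat qubit $r$ acts on the data in one of three ways:
\begin{itemize}
\item before both data gates, it propagates to $\{q_{2r},q_{2r+1}\}$;
\item between the two data gates, it propagates to $q_{2r+1}$;
\item after both data gates, it leaves the data untouched.
\end{itemize}
In every one of these cases it flips $M_{r-1}^{(0)}$ and $M_r^{(0)}$. Separately, a fault just before, or inside, a cat stabilizer measurement flips that single outcome, possibly together with one neighbour if it hits the control of the first gate of that measurement. Since $\gamma_i/2\ge d=7$, the ring of cat checks behaves like a repetition code of length at least $7$. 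Hence at most $3$ cat-qubit flips produce a triggered set $\mathcal{A}$ whose minimal pairing is unique, up to the complementary choice $\mathcal{I}$ versus $\mathcal{I}^c$.

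Next we tabulate, for each $n_{\mathrm{a}}\in\{0,\dots,6\}$ (larger values can only arise from more than $3$ faults unless paired as in Algorithm~\ref{algo:D7_6}), every decomposition of $\mathcal{A}$ into $k$ cat-qubit faults and $m$ measurement faults with $k+m\le 3$.

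The even cases $n_{\mathrm{a}}=0,4,6$ with no ambiguity reduce to the $d=5$ argument. We correct $q_{2r+1}$ for each cat qubit in the chain $\mathcal{I}$ or $\mathcal{I}^c$, whichever has weight at most $3$. A before-gate fault then leaves one residual error, a middle fault leaves none, and an after-gate fault leaves one. Each cat fault therefore contributes at most one data error, and measurement faults contribute none.

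The difficulty is that at $d=7$ a single measurement fault can mask or shift a chain. For example, one cat fault plus one measurement fault next to it looks like a length-two chain of cat faults. Likewise, three triggered checks could come from one cat fault plus one measurement fault, or from three measurement faults, and the two explanations demand different corrections. This is where the adaptive second round enters. For each ambiguous configuration of $\mathcal{A}$ relative to $a_0$, we remeasure the specific checks $\mathcal{J}_{n_{\mathrm{a}},k}(a_0)$ from Appendix~\ref{app:Lists7}, at most four of them. A check whose value changes between rounds was flipped by a measurement fault, and it uses up part of the fault budget. We then show case by case that the remaining budget identifies the correction to within a residual of weight at most $w$. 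The one-round argument fixes only which chains may be present; the second round fixes which endpoints are genuine.

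Concretely, for $n_{\mathrm{a}}=1$ (Algorithm~\ref{algo:D7_1}), the possible explanations are:
\begin{itemize}
\item a single measurement fault, which needs no correction;
\item a cat fault on qubit $a_0$ or $a_0+1$ masked by a measurement fault on its other check, which needs $q_{2a_0+1}$ or $q_{2a_0+3}$.
\end{itemize}
A second-round disagreement on $\mathcal{J}_{1,0}$ or $\mathcal{J}_{1,1}$ reveals which side the masked check lies on. A third fault can corrupt at most one second-round bit, and in that case only $w=3$ faults are present, so a residual of weight $\le 3$ is acceptable. We check this pattern in the same way for $n_{\mathrm{a}}=2,3,5$: branch on the geometric relation between the triggered checks (distance $1,2,3$ on the ring) and confirm that each branch of Algorithms~\ref{algo:D7_3}--\ref{algo:D7_5} covers every consistent fault set. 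Odd $n_{\mathrm{a}}$ forces at least one measurement fault, which lowers the effective cat-fault budget to $2$ and makes the $d=5$ rules applicable after the second round locates that fault.

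The main obstacle is completeness of this enumeration, namely showing that no weight-$\le 3$ fault pattern is consistent with the observed first- and second-round outcomes yet mis-corrected. I would first attempt a mechanical reduction: by translation invariance along the ring, fix $a_0=0$ and enumerate fault placements within a window of about $7$ cat qubits. I would then argue that faults outside the window cannot interact with the triggered checks. This reduces the case analysis to a finite check, the one that justifies the lists $\mathcal{J}_{n_{\mathrm{a}},k}$. If a fully hand-written argument becomes unwieldy, I would present the structural cases by hand and state that the finite residual check was verified exhaustively.
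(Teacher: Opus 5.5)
Your skeleton is the paper's own: split on $n_{\mathrm{a}}=|\mathcal{A}|$, then use second-round outcomes on $\mathcal{J}_{n_{\mathrm{a}},k}(a_0)$ to expose first-round measurement faults. The explicit case analysis, however, misses the mechanism that makes $d=7$ delicate, and two of its claims are false. A run of $k$ adjacent faulty cat qubits flips only its two endpoint checks, because the interior flips cancel, yet it can carry $2k$ data errors. A single measurement fault can then hide an endpoint or fake one. For $n_{\mathrm{a}}=1$ you list a lone measurement fault and a single cat fault masked by a measurement fault, and you say the latter needs a correction. It does not: two faults with at most two data errors already meet the bound. The case you omit is the one that forces a correction. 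Cat faults on $a_0+1$ and $a_0+2$ before the data gates create four data errors but flip only $M_{a_0}$ and $M_{a_0+2}$, and a faulty $M^{(0)}_{a_0+2}$ hides the second flip. That is three faults with four data errors. It is exactly what the persistent flip of $M^{(1)}_{a_0+2}$ (with $a_0+2\in\mathcal{J}_{1,0}(a_0)$) detects, and it is the case the paper singles out.

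Likewise, $n_{\mathrm{a}}=4$ is not unambiguous and does not reduce to the $d=5$ rule. The pattern $\mathcal{A}=\{r,r+2,r+3,r+4\}$ arises from cat faults on $r+1,r+2,r+4$. These require three corrections, one in each pair $\{q_{2j},q_{2j+1}\}$ with $j\in\{r+1,r+2,r+4\}$. The same pattern also arises from a single cat fault on $r+3$ together with faulty $M^{(0)}_r$ and $M^{(0)}_{r+4}$, and there those corrections leave a residual of weight up to $5$. No correction that depends only on first-round outcomes serves both explanations. This is why Algorithm~\ref{algo:D7_4} remeasures $\mathcal{J}_{4,k}(a_0)$ and sometimes declines to correct. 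Among even values, only $n_{\mathrm{a}}\in\{0,6\}$ are genuinely unambiguous.

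Your odd-$n_{\mathrm{a}}$ reduction assumes the second round locates the measurement fault, which at most four remeasurements do not do in general. The paper's $|\mathcal{A}|=3,5$ branches instead partly rely on fixed fallback corrections, applied without remeasurement, that remove one data error from every admissible run.

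Two further repairs are needed. First, a check that changes between rounds need not signal a measurement fault. A cat fault between its two first-round check gates flips one outcome now and persists afterwards, so it must be in the enumeration. Second, your window reduction is false as stated. Remote faults add remote triggers to $\mathcal{A}$, which changes $n_{\mathrm{a}}$, the branch taken, and the global sets $\mathcal{I},\mathcal{I}^c$. The exhaustive fallback therefore still needs a genuine locality argument before it becomes a finite check.
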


\begin{proof}
A detailed proof is provided in Appendix~\ref{app:FTdist7} to maintain the clarity of the main exposition.

\end{proof}

\subsection{Distance Nine and Beyond}

For codes of distance $d = 9$ and various values of $\gamma_i$, the proposed scheme includes two complete rounds of cat stabilizer measurements, and has been decoded using a \ac{LUT} approach.
This \ac{LUT} was constructed via exhaustive search by inserting all combinations of faults of weight up to $t$, considering the following types of errors:
$(i)$ errors occurring before the start of the circuit on each cat qubit,
$(ii)$ errors on the control qubit of each two-qubit gate, occurring after the gate's execution, and
$(iii)$ errors occurring just before each cut-cat measurement.
These fault locations include all errors that can propagate to higher-weight data errors or modify the measured cut-cat syndrome during the execution of the circuit, while faults at other locations do not propagate to higher-weight errors and therefore do not require inclusion in the \ac{LUT}.
Specifically, the \ac{LUT} assigns a fixed correction to each corresponding cat measurement outcome, ensuring that no error pattern of weight up to $t$, consistent with that outcome, propagates to a data error of weight greater than $t$ after correction (up to a stabilizer).
In this way, we ensure that all errors within the correction capability of the code never propagate to higher-weight errors.
A pseudocode description of the \ac{LUT} construction for codes of distance $d = 9$ can be found in Appendix~\ref{app:LUT}.

We conjecture that the proposed scheme, when supplemented with additional rounds of cat stabilizer measurements, can serve as a $t$-\ac{FT} syndrome extraction gadget for CSS codes of arbitrary distance.

\subsection{Proof of Fault-Tolerant Syndrome Extraction}
In the following, we show that the cut-cat state syndrome extraction gadget, when combined with a standard \ac{FT} measurement protocol based on repeated syndrome extraction, constitutes a \ac{FT} syndrome extraction gadget.

\begin{theorem}[]
\label{th:FTSE}
Let us consider a quantum code of distance $d$. 
When the cut-cat state syndrome extraction gadget is combined with a measurement protocol in which syndrome extraction is repeated until $t+1$ identical syndromes are obtained consecutively~\cite{Sho96:CatState, TanBal23:t+1RepeatedSyndromeExraction}, the resulting procedure constitutes a \ac{FT} syndrome extraction gadget according to Definition~\ref{def:FTEC}.
\end{theorem}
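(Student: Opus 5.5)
Our plan is to reduce the statement to the two conditions of Definition~\ref{def:FTEC} and to check each one separately. We use Theorems~\ref{th:FTdist3}--\ref{th:FTdist7} (and the \ac{LUT} construction for $d=9$) as the per-round ingredient. These results give the following \emph{per-round property}. In a single execution of the cut-cat gadget for one generator, after the cat-stabilizer-based correction is applied, any $w$ faults inside that execution leave a data error of weight at most $w$, up to a stabilizer. Faults on data qubits and on the targets of the two-qubit gates do not propagate, so each contributes weight at most one. We first extend this from one generator to a full round and then to the full repeated protocol. Faults are counted additively across the generator gadgets of a round. Since each fault belongs to exactly one gadget, a round containing $s_j$ faults produces data error weight at most $s_j$. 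Summing over rounds, if the whole procedure contains $s=\sum_j s_j\le t$ faults, the accumulated data error $\M{E}_{\mathrm{acc}}$ introduced by the procedure has weight at most $s$.

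Next we analyse the repetition rule, which stops once $t+1$ identical consecutive syndromes appear. We use the standard pigeonhole argument from~\cite{Sho96:CatState, TanBal23:t+1RepeatedSyndromeExraction}. With at most $s\le t$ faults, the protocol terminates after at most $(s+1)(t+1)$ rounds, since at most $s$ rounds can contain faults. Among the $t+1$ identical final syndromes, at least one round contains no fault. That round reports the true syndrome of the data error present at that moment. All of its identical successors agree with it, so any fault in a later round has not changed the syndrome. We will argue that this common syndrome is the true syndrome of the input error $\M{E}_{\mathrm{in}}$ times the accumulated error up to the fault-free round. Any faults after that round have left the syndrome unchanged, and they still add at most their count in weight by the per-round property.

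With this in hand we verify the two conditions of Definition~\ref{def:FTEC}.

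\emph{Condition 1} (no input error, $s\le t$ faults). The output differs from a codeword by $\M{E}_{\mathrm{acc}}$ together with the decoder correction. We will take the definition's output to be the state before the final decoding, so the bound $\mathrm{wt}(\M{E}_{\mathrm{acc}})\le s$ settles it directly. Alternatively, if decoding is included, we observe the following. The decoded syndrome is that of an error of weight at most $s\le t$ (up to late syndrome-preserving faults). A minimum-weight decoder therefore returns a correction $\M{C}$ whose combination with that error is a stabilizer. What remains is only the syndrome-preserving late part, of weight at most $s$.

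\emph{Condition 2} (input error of weight $r$, with $r+s\le t$). The data state entering the last fault-free round carries $\M{E}_{\mathrm{in}}\M{E}_{\mathrm{acc}}'$, which has weight at most $r+s\le t$. The extracted syndrome is exact for this error, so an ideal decoder of a distance-$d$ code corrects it.

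The main obstacle is making the per-round property rigorous once faults are spread across the repeated rounds. A fault in one round can leave a residual cat-stabilizer pattern that the adaptive decoder in the next round interprets differently. The fix is to note that cat states are freshly prepared in every round, so cat-stabilizer decoding is local to one gadget execution. The only coupling between rounds is through the data error, and data errors never propagate. A second subtlety is faults that land on the generator's own syndrome readout. These flip the reported outcome without touching the data, and they are precisely what the $t+1$ repetition handles. We will treat them as faults that change the reported syndrome without adding weight, which fits the pigeonhole argument above.
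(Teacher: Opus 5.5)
Your proposal is correct and follows the same route as the paper. The paper also combines the per-gadget weight bound from Theorems~\ref{th:FTdist3}--\ref{th:FTdist7} and the LUT with Shor's $t+1$-repetition argument for measurement faults, giving a total data error of weight at most $r+s\le t$; you simply make explicit the additivity across gadgets, the locality of cat-stabilizer decoding, and the pigeonhole step that the paper only asserts. One claim is false but not load-bearing: you say faults after the fault-free round leave the syndrome unchanged, yet a data fault at the end of the last accepted round can change the true syndrome without being seen, and your argument only needs the weight bound on those late faults, which you do have.
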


\begin{proof}
Theorems~\ref{th:FTdist3}–\ref{th:FTdist7} and the \acp{LUT} construction ensure that any set of $s \le t$ faults whose errors propagate from syndrome to data qubits during a round of syndrome extraction produces a data error of weight at most $s$.
Moreover, measurement faults are handled by the standard Shor-style repetition protocol in which syndrome extraction is repeated until $t+1$ identical syndromes are obtained consecutively~\cite{Sho96:CatState}. 
In particular, if at most $t$ measurement errors occur during the repeated extraction procedure, then an incorrect syndrome is not accepted.
Combining these two facts, an incoming data error of weight $r$ together with $s$ additional faults during syndrome extraction, with $r+s \le t$, results in a total data error of weight at most $t$. Such an error is always correctable by an ideal decoder, leading to a logical error with probability $\mathcal{O}(p^{t+1})$.
\end{proof}

\section{Numerical Results}\label{sec:NumRes}

This section presents a numerical analysis of the fault-tolerance of the proposed syndrome extraction scheme using Monte Carlo simulations.
All numerical simulations are performed by running the decoder until a minimum of 100 errors are reached, ensuring reliable results.

\emph{1) Gadget Validation:} Let $\rv{S}$ denote the random variable representing the number of errors introduced by the syndrome extraction gadget that are propagated into the data qubits.
Also, let $p_\mathrm{FT}$ denote the physical error rate per gate and measurements.
In our numerical analysis, we simulate the syndrome extraction process varying the code distance $d\in \{3, 5, 7, 9\}$. 
To model physical noise, we apply an initial depolarizing channel to each cat qubit. Subsequently, a single-qubit depolarizing channel is applied after every single-qubit gate, and a two-qubit depolarizing channel is applied after every two-qubit gate. All resulting errors are then propagated through the entire circuit. Finally, a depolarizing channel is applied immediately before each measurement. The depolarizing probability for all channels is consistently denoted by $p_\mathrm{FT}$.
Fig.~\ref{Fig:BDLogErr} shows the probability that an error of weight $w > t$ is propagated to the data qubits by the cut-cat state syndrome extraction gadget as a function of the physical gate error probability, for $\gamma_i = 2d$.
In particular, the obtained results match the expected slope $\mathcal{O}(p^{t+1})$ for a code of distance $d = 2t + 1$, depicted in dashed lines.
Together with the theorems, this provides further evidence that our scheme is $t$-\ac{FT}, as it prevents low-weight errors ($w < t$) from propagating into higher-weight faults.
Fig.~\ref{Fig:HalfCatt_Reduced} shows an analogous plot for the case $\gamma_i < 2d$.
In this regime, $d$ cat qubits are still required to correctly decode the cat-state syndrome, but one or more of these cat qubits interacts with only a single data qubit.
For example, for $d = 5$ and $\gamma_i = 8$, three cat qubits interact with two data qubits, while the remaining two cat qubits interact with one data qubit.
The decoding algorithm remains the same as that described in Section~\ref{sec:Decoding}; however, when applying the final correction, the indices of some data qubits must be modified to take into account that some cat qubits interact with only a single data qubit.
In this case as well, the numerical results follow the expected scaling $\mathcal{O}(p^{t+1})$ for a code of distance $d = 2t + 1$, as indicated by the dashed lines.

We provide an upper bound on the probability that a fault pattern propagates to a data error of weight $w > t$, for arbitrary code distance $d$, given the weight $\gamma_i$ of the stabilizer generator measured using a cut-cat state syndrome-extraction circuit.
We identify all fault locations in the syndrome extraction gadget from which a fault can propagate to the data qubits or cause a cut-cat syndrome measurement error. 
Let $N_{\mathrm{FL}}$ denote the total number of such fault locations.
For the cut-cat state syndrome extraction circuit, assuming that each cat qubit interacts with two data qubits, the number of relevant fault locations scales linearly with the stabilizer weight. 
Specifically, each cat qubit participates in four CNOT gates whose control errors can propagate either to data qubits or to the cut-cat syndrome. 
Hence, each cat qubit contributes four fault locations. 
Likewise, each cut-cat syndrome qubit contributes four additional fault locations: one state preparation, two CNOT targets, and one measurement. Since a stabilizer of weight $\gamma_i$ is measured using $\gamma_i/2$ cat qubits, the total number of relevant fault locations is
\begin{align}
N_{\mathrm{FL}} = \frac{\gamma_i}{2} (4 + 4) = 4 \gamma_i .
\end{align}

Under a circuit-level noise model with physical gate error probability $p_{\mathrm{FT}}$, an upper bound on the probability that more than $t$ faults occur at these locations, thereby potentially producing a data error of weight $w > t$, is given by
\begin{align}
\label{eq:UBcat}
\Prob{\rv{S} > t}
\leq 1 - \sum_{s=0}^{t}
\binom{N_{\mathrm{FL}}}{s}
\left(\frac{2 p_{\mathrm{FT}}}{3}\right)^{s}
\left(1 - \frac{2 p_{\mathrm{FT}}}{3}\right)^{N_{\mathrm{FL}} - s}
\end{align}
where the factor of $2/3$ arises because, for a depolarizing error on a faulty gate, only two of the three Pauli errors either propagate to the data qubits or induce a cut-cat syndrome measurement error.
This upper bound is shown in Fig.~\ref{Fig:BDLogErr} and is consistent with the observed numerical behavior.

\begin{figure}[t]
 	\centering
 	\resizebox{0.99\columnwidth}{!}{
 	   \includegraphics{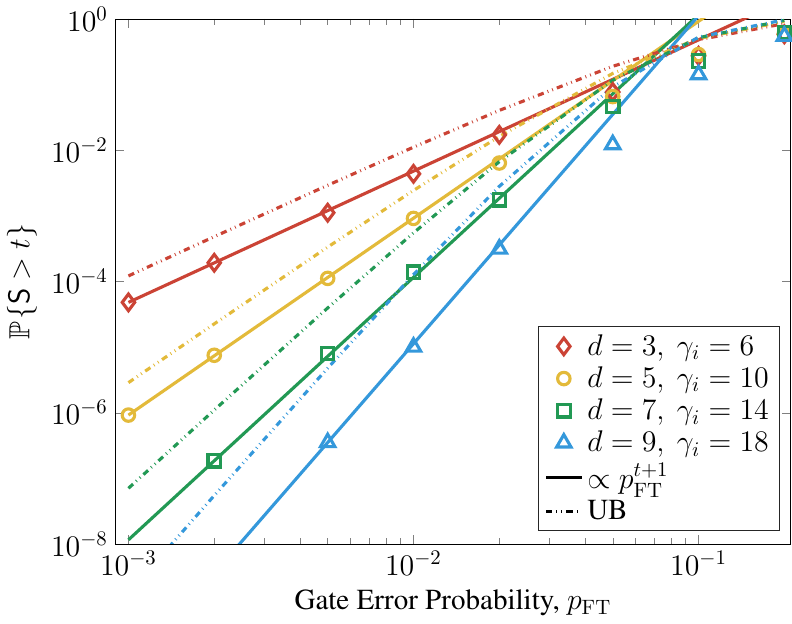}
     }
 	\caption{Probability that an error of weight $w > t$ is propagated to the data qubits by the cut-cat state syndrome extraction gadget vs. gate error probability. Dashed lines indicate visually the expected $\mathcal{O}(p^{t+1})$ scaling for a code of distance $d=2t+1$, fitting the simulated data points.}
 	\label{Fig:BDLogErr}
\end{figure}

\emph{2) Code Block Analysis:}
To perform code block analysis, we employ the cut-cat state syndrome extraction gadget for the $[[49,1,5]]$ triorthogonal quantum code~\cite{TransversalClifford25:ShuVic}, whose triorthogonal structure enables transversal non-Clifford logical gates and therefore makes it particularly well suited for \ac{FT} magic-state distillation \cite{BraHaa:12, DagBluBro:25}.
This code admits a set of stabilizer generators with weights $\gamma_i = 4, 6, 8, 10$, one generator of weight $\gamma_i = 16$, and one generator of weight $\gamma_i = 32$.
We adopt the cut-cat state gadget to measure stabilizer generators with $\gamma_i > 6$, while full cat state syndrome extraction is used for the remaining generators.
All cat states are prepared fault-tolerantly using post-selection, following the procedure described in~\cite{ForAma25:FTCSS}.
For the noise model, a depolarizing channel with physical error rate $p$ is applied to each data qubit.
During the syndrome extraction procedure, a single-qubit depolarizing channel is applied after each qubit initialization and after every single-qubit gate, while a two-qubit depolarizing channel is applied after every two-qubit gate.
In addition, a depolarizing channel is applied immediately before each measurement.
The depolarizing probability associated with syndrome extraction is consistently denoted by $p_\mathrm{FT}$ and is chosen as $p/20$, $p/50$, or $p/100$, reflecting the requirement that error correction operations be more reliable than the physical noise affecting the data qubits.
The syndrome extraction procedure is repeated until $t+1$ identical syndromes are obtained~\cite{Sho96:CatState, TanBal23:t+1RepeatedSyndromeExraction}.
Decoding is then performed using a look-up table constructed by enumerating all error patterns up to the designed code distance, starting from minimum-weight errors and assigning to each previously unseen syndrome the corresponding correction.
As show in Fig.~\ref{Fig:HalfCat_[[49,1,5]]}, The resulting logical error rates exhibit the expected scaling $\mathcal{O}(p^{3})$ for a code of distance $d = 5$, shown by the dashed lines, thereby confirming the fault-tolerance of the proposed syndrome extraction gadget.

\begin{figure}[t]
 	\centering
 	\resizebox{0.99\columnwidth}{!}{
 	   \includegraphics{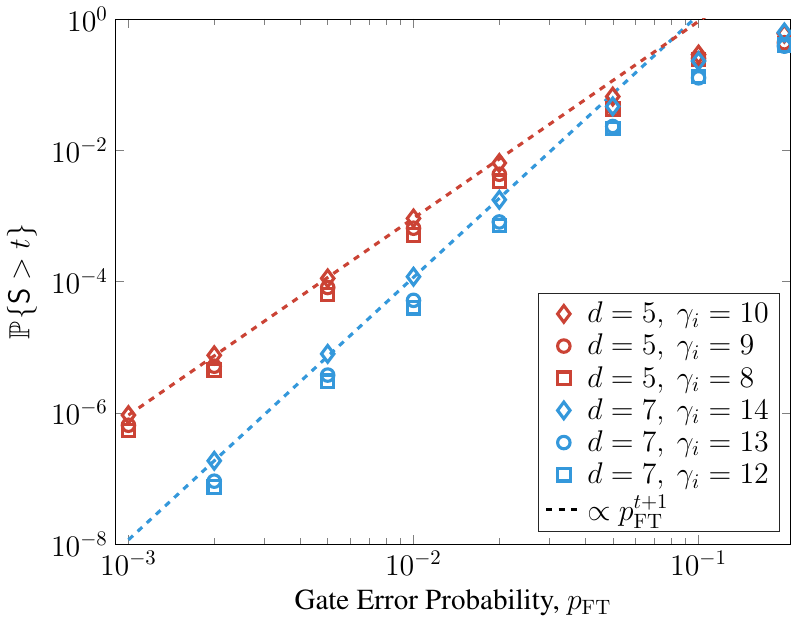}
     }
 	\caption{Probability that an error of weight $w > t$ is propagated to the data qubits by the cut-cat state syndrome extraction gadget vs. gate error probability. Dashed lines indicate visually the expected $\mathcal{O}(p^{t+1})$ scaling for a code of distance $d=2t+1$, fitting the simulated data points.}
 	\label{Fig:HalfCatt_Reduced}
\end{figure}
\begin{figure}[t]
 	\centering
 	\resizebox{0.99\columnwidth}{!}{
 	   \includegraphics{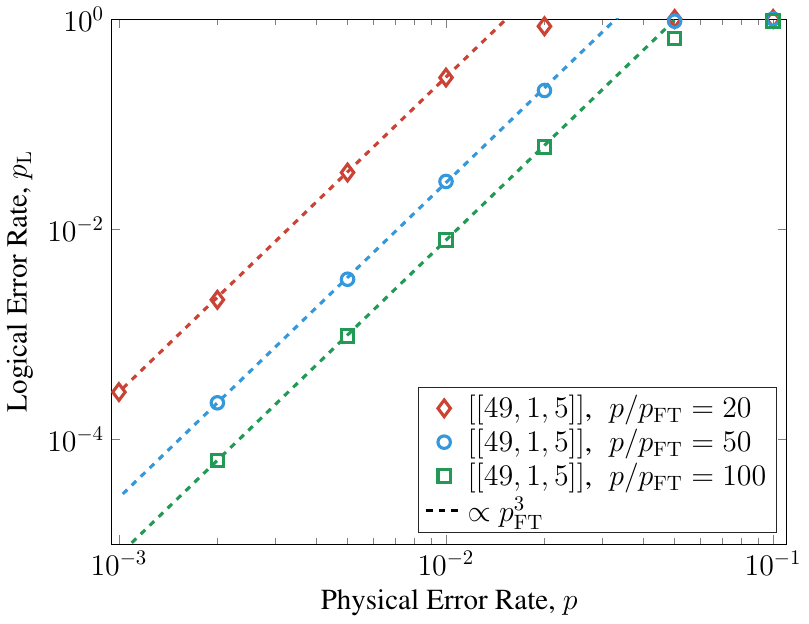}
     }
 	\caption{Logical error rate vs. physical error rate over a depolarizing channel. $[[49,1,5]]$ triorthogonal code employing the cut-cat state syndrome extraction gadget for each generator of weight $\gamma_i > 6$.
    Discontinuous lines indicate visually the expected $\mathcal{O}(p^{3})$ scaling for a code of distance $d=5$, fitting the simulated data points.}
 	\label{Fig:HalfCat_[[49,1,5]]}
\end{figure}

\emph{3) Resource Comparison with State-of-the-Art Methods:} In the following, we compare the proposed syndrome extraction scheme with the state-of-the-art alternatives from the literature, focusing on three key metrics: the required number of simultaneous qubits (assuming fast reset is available), the number of two-qubit gates, and the circuit depth.

In this comparison, we assume that two-qubit gates acting on disjoint pairs of qubits can be executed in parallel.
We further assume the availability of mid-circuit measurement and reset, which enables ancilla reuse across successive stabilizer measurements. If parallelism is limited, circuit depth increases.
In the absence of fast reset, the primary effect is an increased ancilla requirement.
In \cite{Pra23:OtimizedSchemeFlag5and7} \ac{FT} syndrome extraction schemes based on flag qubits for code distances $d = 5$ and $d = 7$ are introduced. For the measurement of a stabilizer operator with weight $\gamma_i = 14$ in a distance $d=7$ code, their approach requires 44 two-qubit gates and uses 8 physical qubits. The circuit depth is also 44, as each two-qubit gate shares the same control qubit and must be executed sequentially.
In the comparison with cat-based schemes, we assume a $t$-\ac{FT} preparation of the cat state using the postselection method described in \cite{ForAma25:FTCSS}. 
The number of two-qubit gates required by the cut-cat state approach to measure the same operator ranges from 40 to 48, depending on the observed cat-state syndrome. 
The number of simultaneous qubits required is 10.
Since cat state preparation does not involve data qubits, it can be performed offline.
Under this assumption, and assuming serial measurement of all cat stabilizers using a single ancilla qubit, the circuit depth ranges from 9 to 13.
For comparison, the full cat state variant of the same scheme requires 39 two-qubit gates, 20 simultaneous qubits, and achieves a circuit depth of only 1.
Moreover, a syndrome extraction scheme for codes of distance $d = 9$ is presented in \cite{cha20:FLagAllStabCodes}, where the measurement of a stabilizer operator with weight $\gamma_i = 18$ requires 306 two-qubit gates and 10 simultaneous qubits. In comparison, the proposed cut-cat state protocol, using the $t$-\ac{FT} cat-state preparation method from \cite{ForAma25:FTCSS}, requires only 70 two-qubit gates. The number of simultaneous qubits needed in this case is 13, and the circuit depth is 20.
A summarized comparison of $t$-\ac{FT} syndrome extraction schemes based on flag qubits, full cat states, and cut-cat states is presented in Table~\ref{tab:Resources}.

From this analysis, we observe that the proposed cut-cat scheme consistently requires fewer than half the number of simultaneous qubits compared to full cat state syndrome extraction, thereby enabling operation in qubit-constrained architectures with only a limited increase in circuit depth.
Moreover, the circuit depth, in the section involving interaction with data qubits, is significantly lower than in flag-based implementations, which reduces the accumulation of memory noise during the extraction round. 
Finally, the number of two-qubit gates is comparable for codes of distance $d = 5$ and $d = 7$ to the highly optimized schemes proposed in \cite{Pra23:OtimizedSchemeFlag5and7}.
However, for codes of distance $d = 9$, the proposed technique achieves a notable reduction in gate count compared to the standard approach presented in \cite{cha20:FLagAllStabCodes}, thereby reducing the number of two-qubit depolarizing noise insertions in circuit-level noise models.
When integrated into a complete \ac{FT} protocol, these features collectively reduce the contribution of syndrome extraction to the overall logical error rate.

\begin{table}[t]
    \centering
    \setlength{\tabcolsep}{3pt}
    \caption{Comparison between FT Syndrome Extraction Strategies}
    \label{tab:Resources}
    \small
    \resizebox{0.48\textwidth}{!}{
    \includegraphics[width=\textwidth]{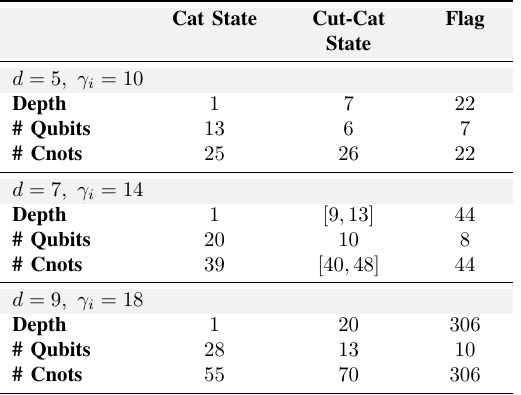}}
\end{table}


\section{Conclusions}\label{sec:conclusions}

In this work, we introduce a novel \ac{FT} syndrome extraction scheme based on cat states for CSS codes, with explicit  constructions demonstrated up to distance nine.
The scheme measures the cat-state stabilizers following their interaction with the data qubits, effectively constituting a second stage of error correction on the syndrome qubits.
This can be seen as a second stage of error correction on the syndrome qubits.
This approach preserves the advantages of cat state syndrome extraction, namely, parallelized interaction with data qubits, while reducing the required number of qubits for syndrome extraction by more than half.
Our analysis shows that the proposed method achieves a comparable two-qubit gate count to state-of-the-art flag qubit protocols for distances $d \leq 7$, while demonstrating a significant advantage for codes of distance $d = 9$, particularly as the weight of the measured operator increases.
Extensions to higher distances are left for future investigation.

\appendices

\section{Lists of Subsets $\mathcal{J}_{n_\mathrm{a}, k}$}
\label{app:Lists7}
In the following, we provide lists of subsets of cat stabilizer indices  $\mathcal{J}_{n_\mathrm{a}, k}$.
These lists are defined $\forall j = 0, \dots, \gamma_i/2 - 1$.
\begingroup
\allowdisplaybreaks
\begin{align}
    \notag {J}_{1, 0}(j) &= \left\{ j + 1, j + 2 \right\} \\
    \notag {J}_{1, 1}(j) &= \left\{ j - 3 , j -2\right\} \\
    \notag {J}_{2,0}(j) &= \left\{ j + 2, j + 3, j + 4 \right\} \\
    \notag {J}_{2,1}(j) &= \left\{ j -1, j, j + 1 \right\} \\
    \notag {J}_{2,2}(j) &= \left\{ j + 2, j + 3 \right\} \\
    \notag {J}_{2,3}(j) &= \left\{ j -2, j -1 \right\} \\
    \notag {J}_{2,4}(j) &= \left\{ j - 1, j + 1 \right\} \\
    \notag {J}_{3,0}(j) &= \left\{ j + 3, j + 4 \right\} \\
    \notag {J}_{3,1}(j) &= \left\{ j + 2, j + 3 \right\} \\
    \notag {J}_{3,2}(j) &= \left\{ j - 1, j\right\} \\
    \notag {J}_{4,0}(j) &= \left\{ j + 2, j + 3\right\} \\
    \notag {J}_{4,1}(j) &= \left\{ j - 1, j \right\} \\
    \notag {J}_{4,2}(j) &= \left\{ j - 1, j \right\} \\
    \notag {J}_{4,3}(j) &= \left\{ j + 3, j + 4\right\} \\
    \notag {J}_{5,0}(j) &= \left\{ j + 3, j + 4\right\}.
\end{align}
\endgroup

These lists are used to compare outcomes of subsets of cat stabilizers across different measurement rounds, in order to minimize their number.

\section{Cut-Cat LUT for Syndrome Extraction}
\label{app:LUT}

In this appendix, we present the pseudocode used to generate the \ac{LUT} for cut-cat syndrome extraction at a given code distance. 
The pseudocode is given in Algorithm~\ref{alg:cutcat_lut}.
In particular, we consider the following fault locations: $(i)$ errors occurring before the start of the circuit on each cat qubit,
$(ii)$ errors on the control qubit of each two-qubit gate, occurring after the gate's execution, and
$(iii)$ errors occurring just before each measurement.
In the algorithm, $\texttt{wt}()$ denotes the function that computes the Hamming weight of a binary vector.
The algorithm takes as input the stabilizer weight $\gamma_i$ of the generator being measured and the error correction capability of the code, $t$.
Upon successful execution, it returns a \ac{LUT} that associates each cat syndrome with a binary correction vector $\V{c}_s$ of length $\gamma_i$, which specifies the correction to be applied to the propagated data error.
At run time, the decoder performs an $O(1)$ expected-time LUT lookup on the measured syndrome and applies the corresponding correction, for a worst-case decoding time $O(\gamma_i)$.

\begin{algorithm}[t]
\caption{Cut-cat LUT construction}
\label{alg:cutcat_lut}
\KwIn{Stabilizer weight $\gamma_i$, code capability $t$}
\KwOut{LUT mapping cat syndromes to corrections, or \textsc{Fail}}
Initialize an empty LUT; \\
Initialize $\mathcal{E}_s = \{\}$ for each syndrome $\V{s}$; \\
\ForAll{$w \in \{1,\dots, t$\}}{
  \ForEach{combination of $w$ fault locations}{
    Insert $\M{X}$ errors at the chosen locations and propagate them through the syndrome extraction circuit; \\
    Let $\V{e} \in \{0,1\}^{\gamma_i}$ be the resulting data error pattern and let $\V{s}$ be the resulting cut-cat syndrome; \\
    Insert $(\V{e},w)$ into the set $\mathcal{E}_s$; \\
    Search for a correction vector $\V{c}_s \in \{0,1\}^{\gamma_i}$ such that
    $\texttt{wt}(\V{e}' \oplus \V{c}_s) \le w'$ for all $(\V{e}',w') \in \mathcal{E}_s$; \\
    \If{no such correction vector exists}{
      \Return{\textsc{Fail}}; \\
    }
    Store $\V{c}_s$ for syndrome $\V{s}$ in the LUT;
  }
}
\Return{\textsc{Success} and the completed LUT};
\end{algorithm}

\section{Proof of Theorem~\ref{th:FTdist7} }
\label{app:FTdist7}
When measuring a code generator of weight $\M{\gamma_i}$, an error of weight $\leq 3$ may cause up to six cat stabilizer measurements to trigger. 
First, note that both a measurement error on $M_{r}^{(i)}$ and an error on the cat qubit $r$ occurring between the two-qubit gates that interact with it during the $i$-th cat stabilizer measurement trigger the same cat stabilizer.
The key difference between these two error types is that the former affects only the first round of cat stabilizer measurements, whereas the latter results in a persistent parity flip in that cat stabilizer across all subsequent rounds. 
Nevertheless, both errors can be detected in a similar way using additional rounds of cat stabilizer measurements.
For clarity and conciseness, we will consider only cat measurement errors, though the discussion applies equally to both types of error.
In the following, we describe how the decoding steps to be applied based on $|\mathcal{A}|$, achieve fault-tolerance. 

\begin{itemize}
    \item If a trivial syndrome is observed after the first round, this indicates either that no errors have occurred, that two errors have occurred on the same cat qubit, or that a single cat qubit error was followed by two measurement errors. 
    In any of these cases, no error correction is needed.
    \item A single cat state stabilizer can be triggered by either a measurement error on its own, or a combination of one or two adjacent cat qubit errors and a measurement error.
    In the last case, we need to apply a correction, since an error of weight $w = 3$ can be propagated to an error of weight $w = 4$.
    To address this, such measurement errors are detected by checking the consistency of outcomes for selectively chosen stabilizers in a second round of measurements.
    For instance, if $M_{r}^{(0)}$ is the triggered cat stabilizer, it is sufficient to check wether one between $M_{r - 2}^{(1)}$ and $M_{r + 2}^{(1)}$ is turned on, and apply a correction on data qubit $q_{2r+1}$ or $q_{2r+3}$, respectively. 
    \item Two cat stabilizers can be triggered by a combination of measurement errors and one or more adjacent cat qubit errors.
    Specifically, the instances with either two or three adjacent cat qubit errors must be corrected, since errors of weight $w = 2$ and $w = 3$ are propagated to errors of weight up to $w = 4$ and $w = 6$, respectively.
    In the case where the triggered pair is separated by three cat qubits, it is sufficient to measure again either one of them and, if the outcome is consistent, apply a correction to three of the data qubits interacting with the cat qubits between the pair.
    Instances where the triggered pair is separated by only two cat qubits require more involved checks.
    Specifically, it is necessary to distinguish between: (i) two adjacent cat qubit errors, possibly accompanied by a third measurement error in the second round; (ii) a single cat qubit error combined with measurement errors; and (iii) a pair of faulty measurements.
    For instance, if $M_{r}^{(0)}$ and $M_{r + 2}^{(0)}$ are triggered, we first measure $M_{r}^{(1)}$.
    If this measurement is off, we proceed to measure $M_{r + 2}^{(1)}$ and $M_{r + 3}^{(1)}$. In this case, no correction is applied if either $M_{r + 2}^{(1)}$ is off or $M_{r + 3}^{(1)}$ is on.
    Indeed, if both $M_{r}^{(1)}$ and $M_{r + 2}^{(1)}$ are off, the error pattern is consistent with two faulty measurements $M_{r}^{(0)}$ and $M_{r + 2}^{(0)}$.
    If $M_{r + 2}^{(1)}$ and $M_{r + 3}^{(1)}$ are on, the error consists of two faulty measurements ($M_{r}^{(0)}$ and $M_{r + 3}^{(0)}$) and a cat error on the $(r+3)$-th qubit.
    Conversely, if $M_{r}^{(1)}$ is on, we measure $M_{r + 2}^{(1)}$ and $M_{r - 1}^{(1)}$.
    If $M_{r + 2}^{(1)}$ is off and $M_{r - 1}^{(1)}$ is on, no correction is applied, as this pattern corresponds to two faulty measurements $M_{r- 1}^{(0)}$ and $M_{r + 2}^{(0)}$ combined with a single cat error on the $r$-th qubit.
    In all the other cases it is necessary to apply a correction to two of the data qubits interacting with the cat qubits between the pair.
    \item  Three cat stabilizers can be triggered by either a single cat qubit error combined with one measurement error, or by two cat qubit errors along with a measurement error.
    In the case where $M_{r}^{(0)}$, $M_{r + 2}^{(0)}$, and $M_{r + 4}^{(0)}$ are triggered, it is sufficient to measure $M_{r + 4}^{(1)}$.
    If the outcome is on, apply a correction to one of the data qubits interacting with the cat qubits between the $(r + 2)$-th and $(r + 4)$-th cat stabilizers, e.g., $q_{2(r+3)+3}$.
    Conversely, if the additional measurement is off, a correction should be applied to one of the data qubits interacting with the cat qubits between the $r$-th and $(r + 2)$-th cat stabilizers, e.g., $q_{2r+3}$.
    Cases where $M_{r}^{(0)}$, $M_{r + 1}^{(0)}$, and $M_{r + 2}^{(0)}$ are triggered can be caused either by two cat qubit errors plus a measurement fault, or by two measurement faults and a single cat qubit error.
    In this case, it is necessary to measure $M_{r - 1}^{(1)}$.
    If this outcome is off, the error may consist of two adjacent cat errors on the $(r + 1)$-th and $(r + 2)$-th qubits along with a measurement fault $M_{r + 1}^{(0)}$.
    We then apply a correction on $q_{2(r+1)+1}$.    
    If instead $M_{r - 1}^{(1)}$ is on, we proceed to measure $M_{r}^{(1)}$.
    If this outcome is on, the error may be due to cat qubit errors on the $r$-th and $(r + 2)$-th qubits, together with a measurement fault $M_{r - 1}^{(0)}$, and we apply a correction on $q_{2(r+1)+3}$.
    Finally, if $M_{r}^{(1)}$ is off, we measure $M_{r + 1}^{(1)}$.
    If this result is on, the error may consist of two cat errors on the $(r + 1)$-th and $(r + 3)$-th qubits plus a measurement fault on $M_{r + 3}^{(0)}$, and we apply a correction on $q_{2(r+1)+1}$.
    In the case where $M_{r}^{(0)}$, $M_{r + 2}^{(0)}$, and $M_{r + 3}^{(0)}$ are triggered, it is sufficient to measure $M_{r + 3}^{(1)}$.
    If this measurement is off, the error may consist of two cat errors on the $(r + 1)$-th and $(r + 2)$-th qubits and a measurement fault ($M_{r + 3}^{(0)}$).
    We then apply a correction on $q_{2(r+1)+1}$.
    An analogous situation occurs if $M_{r}^{(0)}$, $M_{r + 1}^{(0)}$, and $M_{r + 3}^{(0)}$ are triggered.
    In this case, we measure $M_{r}^{(1)}$, and if the result is off, the error may consist of two cat errors on the $(r + 2)$-th and $(r + 3)$-th qubits along with a measurement fault ($M_{r}^{(0)}$).
    Accordingly, we apply a correction on $q_{2(r+1)+3}$.
    There are only two remaining error configurations that need to be addressed.
    The first case involves two cat qubit errors occurring on the $r$-th and the $u$-th cat qubits, where $u > r + 2$, along with a measurement fault.
    The second case consists of two adjacent cat qubit errors and an additional measurement fault on $M_{u}^{(0)}$, such that if $M_{r}^{(0)}$ and $M_{r + 2}^{(0)}$ are triggered, then $u > r + 2$.
    In the first case, if none of the previous corrections have been applied, we check whether two consecutive measurements, $M_{r}^{(0)}$ and $M_{r + 1}^{(0)}$, are triggered.
    If so, we apply a correction on $q_{2(r+1)+3}$.
    In the second case, if none of the previous corrections have been applied, we check whether two  measurements, $M_{r}^{(0)}$ and $M_{r + 2}^{(0)}$, are triggered.
    If so, we apply a correction on $q_{2(r+1)+3}$.
    \item Four cat stabilizers can be triggered by various combinations of errors: two or three cat qubit errors, or one or two cat qubit errors combined with two faulty measurements.
    In particular, configurations involving more than one cat qubit error must be corrected.
    If $M_{r}^{(0)}$, $M_{r + 2}^{(0)}$, $M_{r + 3}^{(0)}$, and $M_{r + 4}^{(0)}$ are triggered, we measure $M_{r + 4}^{(1)}$.
    If this result is on, no correction is applied.
    This pattern likely results from a single cat  error on the $(r + 3)$-th qubit and two measurement faults $M_{r}^{(0)}$ and $M_{r + 4}^{(0)}$.
    In another instance, when $M_{r}^{(0)}$, $M_{r + 2}^{(0)}$, $M_{r + 3}^{(0)}$, and $M_{r + 4}^{(0)}$ are triggered, we measure both $M_{r}^{(1)}$ and $M_{r + 3}^{(1)}$.
    If both outcomes are off, this pattern may be caused by a cat qubit error on the $(r + 2)$-th qubit and two measurement faults $M_{r}^{(0)}$ and $M_{r + 3}^{(0)}$.
    Again, no correction is applied.
    Finally, if $M_{r}^{(0)}$, $M_{r + 1}^{(0)}$, $M_{r + 2}^{(0)}$, and $M_{r + 4}^{(0)}$ are triggered, we measure $M_{r}^{(1)}$.
    If this measurement is off, the error may consist of a cat qubit error on the $(r + 2)$-th qubit and two faulty measurements $M_{r}^{(0)}$ and $M_{r + 4}^{(0)}$.
    In this case as well, no correction is applied.
    For all the other error patterns it is safe to apply a correction, as shown in Algorithm~\ref{algo:D7_4}.
    Recall that the number of cat qubits must be at least equal to the code distance, so that any cat qubit error of weight $\leq 3$ can always be correctly identified by the cat stabilizer measurements.
    \item Five cat stabilizers can be triggered by two cat qubit errors and a faulty measurement.
    In case $M_{r}^{(0)}$, $M_{r + 1}^{(0)}$, $M_{r + 2}^{(0)}$, $M_{r + 3}^{(0)}$, and $M_{r + 4}^{(0)}$ are triggered, we measure $M_{r + 4}^{(1)}$.
    If $M_{r + 4}^{(1)}$ is on, the error corresponds to two cat qubit errors on the $(r + 2)$-th and $(r + 4)$-th qubits, along with a faulty $M_{r}^{(0)}$ measurement. In this case, we apply a correction on $q_{2(r+3)+3}$.
    On the other hand, if $M_{r + 4}^{(1)}$ is off, the error consists of two cat qubit errors on the $(r + 1)$-th and $(r + 3)$-th qubits, and a faulty $M_{r + 4}^{(0)}$ measurement. Accordingly, we correct $q_{2(r+2)+3}$.
    On the other hand, if the triggered cat stabilizers are not all adjacent to each other, two additional configurations must be considered.
    In the case where two adjacent stabilizers and three additional adjacent stabilizers are triggered—specifically, the $r$-th, $(r+1)$-th, $(r+2)$-th, $u$-th, and $(u+1)$-th stabilizers, with $u > r + 2$—it is sufficient to apply a correction on $q_{2(u + 1)+1}$.
    Furthermore, if four adjacent stabilizers are triggered—namely, the $r$-th, $(r+1)$-th, $(r+2)$-th, $(r+3)$-th, and $u$-th stabilizers, with $u > r + 4$—then it is sufficient to apply a correction on $q_{2(r + 1)+1}$.
    In both cases, one of the data qubit errors resulting from the cat qubit errors is always corrected, effectively reducing the worst-case scenario from a weight-$w = 4$ data qubit error to a weight-$w = 3$ error. 
    Since this is due to three errors, the scheme is $3$-\ac{FT}.
\end{itemize}




\bibliographystyle{IEEEtran}
\bibliography{Files/IEEEabrv,Files/StringDefinitions,Files/StringDefinitions2,Files/refs}

@STRING{IEEE_O_CSTO       = "{IEEE} Commun. Surveys Tuts."}

@STRING{IEEE = {The Institute of Electrical and Electronics Engineers}}

@STRING{PR-A = {Phys. Rev. A}}

@STRING{PRL = {Phys. Rev. Lett.}}

@article{DenKitLan:02,
	author = {Eric Dennis and Alexei Kitaev and Andrew Landahl and John Preskill},
	journal = {Journal of Mathematical Physics},
	month = {sep},
	number = {9},
	pages = {4452--4505},
	title = {Topological quantum memory},
	volume = {43},
	year = 2002}

@article{FowMarMar:12,
	author = {Austin G. Fowler and Matteo Mariantoni and John M. Martinis and Andrew N. Cleland},
	journal = PR-A,
	month = {sep},
	number = {3},
	title = {Surface codes: Towards practical large-scale quantum computation},
	volume = {86},
	year = 2012}

@book{NieChu:10,
	author = {Nielsen, Michael A. and Chuang, Isaac L.},
	publisher = {Cambridge University Press},
	title = {Quantum Computation and Quantum Information},
	year = {2010}}

@article{Sho:95,
	author = {Shor, Peter W.},
	journal = {Phys. Rev. A},
	month = {Oct},
	pages = {R2493--R2496},
	title = {Scheme for reducing decoherence in quantum computer memory},
	volume = {52},
	year = {1995}}

@Article{Got:96,
  title = {Class of quantum error-correcting codes saturating the quantum {Hamming} bound},
  author = {Gottesman, Daniel},
  journal = PR-A,
  volume = {54},
  issue = {3},
  pages = {1862--1868},
  numpages = {0},
  year = {1996},
  month = {Sep},
  publisher = {American Physical Society},
  doi = {10.1103/PhysRevA.54.1862},
}

@inproceedings{Got:09,
  title={An introduction to quantum error correction and fault-tolerant quantum computation},
  author={Gottesman, Daniel},
  booktitle={Quantum information science and its contributions to mathematics, {PSAM}},
  volume={68},
  pages={13--58},
  year={2010}
}

@article{Laf:96,
 	title={Perfect quantum error correcting code},
	 author={Laflamme, Raymond and Miquel, Cesar and Paz, Juan Pablo and Zurek, Wojciech Hubert},
	 journal=PRL,
	 volume={77},
	 number={1},
	 pages={198},
	 year={1996},
	 publisher={APS}
}

@article{Kni:97,
	 title = {Theory of quantum error-correcting codes},
	 author = {Knill, Emanuel and Laflamme, Raymond},
	 journal = {Phys. Rev. A},
	 volume = {55},
	 issue = {2},
	 pages = {900--911},
	 numpages = {0},
	 year = {1997},
	 month = {Feb},
	 publisher = {American Physical Society},
	 doi = {10.1103/PhysRevA.55.900},
}

@ARTICLE{Bab:19,
	author={Z. {Babar} and D. {Chandra} and H. V. {Nguyen} and P. {Botsinis} and D. {Alanis} and S. X. {Ng} and L. {Hanzo}},
	journal=IEEE_O_CSTO,
	title={Duality of Quantum and Classical Error Correction Codes: Design Principles and Examples},
	year={2019},
	volume={21},
	number={1},
	pages={970-1010},
	doi={10.1109/COMST.2018.2861361},
	ISSN={1553-877X},
	month={Firstquarter}
}

@article{FleShoWin:08,
	author = {Fletcher, Andrew S. and Shor, Peter W. and Win, Moe Z.},
	journal = {Phys. Rev. A},
	month = {Jan},
	pages = {012320},
	title = {Structured near-optimal channel-adapted quantum error correction},
	volume = {77},
	year = {2008}}

@article{Ter:15,
	author = {Terhal, Barbara M.},
	journal = {Rev. Mod. Phys.},
	month = {Apr},
	pages = {307--346},
	title = {Quantum error correction for quantum memories},
	volume = {87},
	year = {2015}}

@article{Del:21,
  title={Almost-linear time decoding algorithm for topological codes},
  author={Delfosse, Nicolas and Nickerson, Naomi H},
  journal={Quantum},
  volume={5},
  pages={595},
  year={2021},
  publisher={Verein zur F{\"o}rderung des Open Access Publizierens in den Quantenwissenschaften}
}

@article{ForValChi:23,
  title={Logical Error Rates of {XZZX} and Rotated Quantum Surface Codes},
  author={Forlivesi, Diego and Valentini, Lorenzo and Chiani, Marco},
  journal={IEEE JSAC},
  year={2024}
}

@article{Hig:23,
	author = {Higgott, Oscar and Bohdanowicz, Thomas C. and Kubica, Aleksander and Flammia, Steven T. and Campbell, Earl T.},
	journal = {Phys. Rev. X},
	month = {Jul},
	pages = {031007},
	title = {Improved Decoding of Circuit Noise and Fragile Boundaries of Tailored Surface Codes},
	volume = {13},
	year = {2023} }

@article{ForValChi24:MacW,
  title={Performance Analysis of Quantum Error-Correcting Codes via {Williams} Identities},
  author={Diego Forlivesi and Lorenzo Valentini and Marco Chiani},
  journal={Quantum},
  volume={9},
  pages={1950},
  year={2025},
  publisher={Verein zur F{\"o}rderung des Open Access Publizierens in den Quantenwissenschaften}
}

@article{HigGid:23,
  doi = {10.22331/q-2025-01-20-1600},
  title = {Sparse {B}lossom: correcting a million errors per core second with minimum-weight matching},
  author = {Higgott, Oscar and Gidney, Craig},
  journal = {{Quantum}},
  issn = {2521-327X},
  publisher = {{Verein zur F{\"{o}}rderung des Open Access Publizierens in den Quantenwissenschaften}},
  volume = {9},
  pages = {1600},
  month = jan,
  year = {2025}
}

@article{CalSho:96,
  title={Good quantum error-correcting codes exist},
  author={Calderbank, A Robert and Shor, Peter W},
  journal=PR-A,
  volume={54},
  number={2},
  pages={1098},
  year={1996},
  publisher={APS},
  doi = {https://doi.org/10.1103/PhysRevA.54.1098}
}

@article{Ste:96,
  title={Multiple-particle interference and quantum error correction},
  author={Steane, Andrew},
  journal={Proceedings of the Royal Society of London. Series A: Mathematical, Physical and Engineering Sciences},
  volume={452},
  number={1954},
  pages={2551--2577},
  year={1996},
  publisher={The Royal Society London},
 doi = {https://doi.org/10.1098/rspa.1996.0136}
}

@article{Bom06:colorCodes,
  title={Topological quantum distillation},
  author={Bombin, Hector and Martin-Delgado, Miguel Angel},
  journal=PRL,
  volume={97},
  number={18},
  pages={180501},
  year={2006},
  publisher={APS}
}

@ARTICLE{ValForChi25:CylMob,
  author={Valentini, Lorenzo and Forlivesi, Diego and Chiani, Marco},
  journal={IEEE Trans. Inf. Theory.}, 
  title={Cylindrical and {M\"obius} Quantum Codes for Asymmetric {Pauli} Errors}, 
  year={2025},
  volume={71},
  number={5},
  pages={3766-3778},
  doi={10.1109/TIT.2025.3546769}
}

@inproceedings{Sho96:CatState,
  title={Fault-tolerant quantum computation},
  author={Shor, Peter W},
  booktitle={Proceedings of 37th conference on foundations of computer science},
  pages={56--65},
  year={1996},
  organization={IEEE}
}

@article{Ste97:SteaneGadget,
   title={Active Stabilization, Quantum Computation, and Quantum State Synthesis},
   volume={78},
   ISSN={1079-7114},
   DOI={10.1103/physrevlett.78.2252},
   number={11},
   journal={Physical Review Letters},
   publisher={American Physical Society (APS)},
   author={Steane, A. M.},
   year={1997},
   month=mar, pages={2252–2255} }

@article{cha20:FLagAllStabCodes,
  title={Flag fault-tolerant error correction for any stabilizer code},
  author={Chao, Rui and Reichardt, Ben W},
  journal={PRX Quantum},
  volume={1},
  number={1},
  pages={010302},
  year={2020},
  publisher={APS}
}

@article{Pra23:OtimizedSchemeFlag5and7,
  doi = {10.22331/q-2023-10-24-1154},
  title = {Fault-tolerant syndrome extraction and cat state preparation with fewer qubits},
  author = {Prabhu, Prithviraj and Reichardt, Ben W.},
  journal = {{Quantum}},
  issn = {2521-327X},
  publisher = {{Verein zur F{\"{o}}rderung des Open Access Publizierens in den Quantenwissenschaften}},
  volume = {7},
  pages = {1154},
  month = oct,
  year = {2023}
}

@article{ForValChi25:Bub,
  title={Bubble Clustering Decoder for Quantum Topological Codes},
  author={Forlivesi, Diego and Valentini, Lorenzo and Chiani, Marco},
  journal={IEEE Trans. on Comm.},
  year={2025},
  publisher={IEEE}
}

@article{cha18:FirstFLagFT,
  title={Fault-tolerant quantum computation with few qubits},
  author={Chao, Rui and Reichardt, Ben W},
  journal={npj Quantum Information},
  volume={4},
  number={1},
  pages={42},
  year={2018},
  publisher={Nature Publishing Group UK London}
}

@article{Ste:14,
      title={Efficient fault-tolerant decoding of topological color codes}, 
      author={Ashley M. Stephens},
      year={2014},
      eprint={1402.3037},
      journal={arXiv preprint quant-ph/1402.3037},
      archivePrefix={arXiv},
      primaryClass={quant-ph}
}

@article{Yod:17,
  title={The surface code with a twist},
  author={Yoder, Theodore J and Kim, Isaac H},
  journal={Quantum},
  volume={1},
  pages={2},
  year={2017},
  publisher={Verein zur F{\"o}rderung des Open Access Publizierens in den Quantenwissenschaften}
}

@article{cha19:MagicStateFlags,
  title={Fault-tolerant magic state preparation with flag qubits},
  author={Chamberland, Christopher and Cross, Andrew W},
  journal={Quantum},
  volume={3},
  pages={143},
  year={2019},
  publisher={Verein zur F{\"o}rderung des Open Access Publizierens in den Quantenwissenschaften}
}

@article{Cha18:FlagsDet,
   title={Flag fault-tolerant error correction with arbitrary distance codes},
   volume={2},
   ISSN={2521-327X},
   url={http://dx.doi.org/10.22331/q-2018-02-08-53},
   DOI={10.22331/q-2018-02-08-53},
   journal={Quantum},
   publisher={Verein zur Forderung des Open Access Publizierens in den Quantenwissenschaften},
   author={Chamberland, Christopher and Beverland, Michael E.},
   year={2018},
   month=feb, pages={53} }

@article{rof20:BPplusOSD,
  title={Decoding across the quantum low-density parity-check code landscape},
  author={Roffe, Joschka and White, David R and Burton, Simon and Campbell, Earl},
  journal={Physical Review Research},
  volume={2},
  number={4},
  pages={043423},
  year={2020},
  publisher={APS}
}

@inproceedings{ValForChi25:latency,
  title={Impact of Decoding Latency in the Assessment of Quantum Surface Codes Performance},
  author={Valentini, Lorenzo and Forlivesi, Diego and Chiani, Marco},
  booktitle={2025 International Conference on Quantum Communications, Networking, and Computing (QCNC)},
  pages={554--559},
  year={2025},
  organization={IEEE}
}

@article{ForAma25:FTCSS,
  title={Flag at origin: a modular fault-tolerant preparation for {CSS} codes},
  author={Forlivesi, Diego and Amaro, David},
  journal={arXiv preprint arXiv:2508.14200},
  year={2025}
}

@article{TransversalClifford25:ShuVic,
  title={Transversal Clifford and {T}-gate codes of short length and high distance},
  author={Jain, Shubham P and Albert, Victor V},
  journal={IEEE Journal on Selected Areas in Information Theory},
  year={2025},
  publisher={IEEE}
}

@article{TanBal23:t+1RepeatedSyndromeExraction,
  title={Adaptive syndrome measurements for Shor-style error correction},
  author={Tansuwannont, Theerapat and Pato, Balint and Brown, Kenneth R},
  journal={Quantum},
  volume={7},
  pages={1075},
  year={2023},
  publisher={Verein zur F{\"o}rderung des Open Access Publizierens in den Quantenwissenschaften}
}

@article{Breebe21:QLDPCcodes,
  title={Quantum low-density parity-check codes},
  author={Breuckmann, Nikolas P and Eberhardt, Jens Niklas},
  journal={PRX quantum},
  volume={2},
  number={4},
  pages={040101},
  year={2021},
  publisher={APS}
}

@article{BraHaa:12,
  title={Magic-state distillation with low overhead},
  author={Bravyi, Sergey and Haah, Jeongwan},
  journal={Physical Review A—Atomic, Molecular, and Optical Physics},
  volume={86},
  number={5},
  pages={052329},
  year={2012},
  publisher={APS}
}

@article{DagBluBro:25,
  title={Experimental demonstration of high-fidelity logical magic states from code switching},
  author={Daguerre, Lucas and Blume-Kohout, Robin and Brown, Natalie C and Hayes, David and Kim, Isaac H},
  journal={arXiv preprint arXiv:2506.14169},
  year={2025}
}

@article{Got24:ManyHypercube,
  title={Many-hypercube codes: High-rate quantum error-correcting codes for high-performance fault-tolerant quantum computing},
  author={Goto, Hayato},
  journal={arXiv preprint arXiv:2403.16054},
  year={2024}
}

@article{Yam24:ConcatenatedCodes,
  title={Time-efficient constant-space-overhead fault-tolerant quantum computation},
  author={Yamasaki, Hayata and Koashi, Masato},
  journal={Nature Physics},
  volume={20},
  number={2},
  pages={247--253},
  year={2024},
  publisher={Nature Publishing Group UK London}
}

@inproceedings{AhaBen97:FTschemeFp,
  title={Fault-tolerant quantum computation with constant error},
  author={Aharonov, Dorit and Ben-Or, Michael},
  booktitle={Proceedings of the twenty-ninth annual ACM symposium on Theory of computing},
  pages={176--188},
  year={1997}
}

\end{document}